\def\eqnarray{\stepcounter{equation}\let\@currentlabel=\theequation
\global\@eqnswtrue
\global\@eqcnt\z@\tabskip\@centering\let\\=\@eqncr
$$\halign to \displaywidth\bgroup\@eqnsel\hskip\@centering
  $\displaystyle\tabskip\z@{##}$&\global\@eqcnt\@ne 
  \hfil$\;{##}\;$\hfil
  &\global\@eqcnt\tw@ $\displaystyle\tabskip\z@{##}$\hfil 
   \tabskip\@centering&\llap{##}\tabskip\z@\cr}
\newtheorem{thm}{Theorem}[section]
\newtheorem{lem}{Lemma}[section]
\newtheorem{prop}{Proposition}[section]
\newtheorem{coro}{Corollary}[section]
\newtheorem{assumpt}{Assumption}
{\bf}{\rm}
\newtheorem{rem}{Remark}
\newenvironment{proof}{\noindent{\it Proof.~~}}{\qed\medskip}
\def\vc#1{\mbox{\boldmath $#1$}}
\newcommand{\qed}{\hspace*{\fill}$\Box$}
\newcommand{\dm}{\displaystyle}
\newcommand{\rmd}{\mathrm{d}}
\newcommand{\calS}{\mathcal{S}}
\newcommand{\calN}{\mathcal{N}}
\newcommand{\calP}{\mathcal{P}}
\newcommand{\bbN}{\mathbb{N}}
\newcommand{\bbR}{\mathbb{R}}
\newcommand{\bbZ}{\mathbb{Z}}
\newcommand{\bbE}{\mathbb{E}}
\newcommand{\Var}{\mathrm{Var}}
\newcommand{\Cov}{\mathrm{Cov}}
\newcommand{\dB}{\mathrm{dB}}
\newcommand{\rsp}{\mathrm{sp}}
\newcommand{\rtm}{\mathrm{tm}}
\newcommand{\eps}{\varepsilon}
\newcommand{\dd}[1]{\if#11 1\!\!1 
\else {\if#1C I\!\!\!C
\else {\if#1G I\!\!\!G 
\else {\if#1J J\!\!\!J 
\else {\if#1S S\!\!\!S
\else {\if#1Z Z\!\!\!Z
\else {\if#1Q O\!\!\!\!Q
\else I\!\!#1
\fi} 
\fi}
\fi}
\fi} 
\fi} 
\fi} 
\fi} 
\begin{document}
%
\title{Spatio-Temporal Correlation of Interference in MANET Under Spatially Correlated Shadowing Environment}
%
%
%
%

\author{Tatsuaki~Kimura,~\IEEEmembership{Member,~IEEE,}
        and~Hiroshi~Saito,~\IEEEmembership{Fellow,~IEEE}
\IEEEcompsocitemizethanks{
\IEEEcompsocthanksitem T. Kimura was with NTT Network Technology Laboratories, NTT Corporation,
Tokyo, Japan, 
and is currently with Department of Information and Communications Technology, 
Graduate School of Engineering, Osaka University,
Osaka, Japan.\protect\\ 
E-mail: kimura@comm.eng.osaka-u.ac.jp\protect\\
\IEEEcompsocthanksitem H. Saito was with NTT Network Technology Laboratories, NTT Corporation,
Tokyo, Japan, 
and is currently with Mathematics and Informatics Center, the University of Tokyo,
Tokyo, Japan.\protect\\ 
E-mail: saito@mi.u-tokyo.ac.jp
}

\thanks{
\copyright~2019 IEEE.  Personal use of this material is permitted.  Permission from IEEE must be obtained for all other uses, in any current or future media, including reprinting/republishing this material for advertising or promotional purposes, creating new collective works, for resale or redistribution to servers or lists, or reuse of any copyrighted component of this work in other works.
}
}

%
%

\markboth{}
{Kimura \MakeLowercase{\textit{et al.}}: Spatio-temporal correlation of interference}
\IEEEtitleabstractindextext{%
\begin{abstract}
Correlation of interference affects spatio-temporal aspects of various wireless mobile 
systems, such as retransmission, multiple antennas and cooperative relaying. 
In this paper, we study the spatial and temporal correlation of interference in mobile ad-hoc networks 
under a correlated shadowing environment. 
By modeling the node locations as a Poisson point process with an i.i.d. mobility model
and considering Gudmundson (1991)'s spatially correlated shadowing model, 
we theoretically analyze the relationship between the correlation distance of log-normal 
shadowing and the spatial and temporal correlation coefficients of interference. 
Since the exact expressions of the correlation coefficients are intractable, 
we obtain their simple asymptotic expressions as the variance of log-normal shadowing increases. 
We found in our numerical examples that 
the asymptotic expansions can be used as tight approximate formulas and 
useful for modeling general wireless systems under spatially correlated shadowing. 
\end{abstract}

\begin{IEEEkeywords}
MANET, interference, correlation, shadowing, stochastic geometry, Poisson point process, mobility. 
\end{IEEEkeywords}}

\maketitle

\IEEEdisplaynontitleabstractindextext

%
\IEEEpeerreviewmaketitle

\IEEEraisesectionheading
{\section{Introduction}\label{sec:introduction}}
\IEEEPARstart{I}{n} wireless mobile systems, interference directly relates to the communication quality of mobile users and thus is a key factor in the performance evaluation, design, and management of the systems. In general, interference can be spatially and temporally correlated because of the correlations of the node locations and fading effect among a transmission area and multiple time-slots. Due to the spatio-temporal aspects of mobile wireless communications, such correlation of interference may affect their performance, e.g., end-to-end throughput, retransmissions~\cite{Haen13a}, cooperative relaying~\cite{Tanb13}, broadcast communications, multiple antennas~\cite{Tanb14}, and handovers~\cite{Kris17}. For example, if interference is temporally correlated, quick retransmission after an outage is likely to fail. Similarly, outages at closely located multiple-antennas may be correlated due to the spatially correlation of interference, which leads to frequent simultaneous outages. Thus, the correlation of interference needs to be analyzed to efficiently design various wireless systems. 

Typically, interference in wireless mobile systems is varied due to the randomness of fading  or shadowing effects, traffic, and spatial locations of transmitters/receivers including mobility. However, these characteristics are difficult to analyze comprehensively with a simulation or experimental approach because we must consider so many factors and parameter settings that such approaches do not scale. To solve this problem, a probabilistic modeling approach, stochastic geometry, has been taken for the correlation analysis of interference in the past decade~\cite{Gant09, Schi12, Gong14, Wen15, Wen16, Kris17, Kouf16, Kouf17, Kouf18}. In this approach, node locations are modeled by a point process, such as a Poisson point process (PPP), and fading and the medium access control (MAC) behavior are also statistically modeled. Ganti and Haenggi~\cite{Gant09} first studied the spatio-temporal correlation of interference in an ad-hoc network by assuming that node locations are distributed with a PPP. They theoretically showed that even with ALOHA as the MAC, the temporal correlation of node locations induces that of interference, which results in that of outage. Furthermore, Schilcher et al.~\cite{Schi12} showed that the temporal correlation of fading or traffic can increase that of interference. 

Although the correlation of interference has been well studied in the literature, few researchers considered the {\it spatially correlated shadowing}. Indeed, most previous work assumed i.i.d. fading. More specifically, only Rayleigh fading is often assumed because of its mathematical tractability. However, shadowing (i.e., slow fading) is spatially correlated in a scale of 50--100 m~\cite{Gold05} because of the effects of reflection or obstacles in transmission channels. In addition, a widely accepted model for the spatial correlation of shadowing was proposed by Gudmundson~\cite{Gudm91}. In this model, shadowing in a channel between a base station (BS) and a moving user is modeled by a log-Gaussian process with an exponentially decaying auto-correlation function that depends on the moving distance. According to this model, the distance at which the correlation appears (disappears) is commonly called a {\it correlation (decorrelation) distance}. Due to the spatial correlation of shadowing, the interference received at adjacent receivers may be more correlated compared with models with the i.i.d. fading assumption. However, such impacts have been ignored, and the relationship between the spatial or temporal correlation of interference and the correlation distance has not been studied. 

In this paper, we study the spatial and temporal correlation of interference in a mobile ad-hoc network (MANET) under a spatially correlated shadowing environment. We model the node locations by a PPP and the correlated shadowing effect by Gudmundson's widely accepted model~\cite{Gudm91}. On the basis of a stochastic geometry approach, we first derive the spatial correlation coefficient of interference, i.e., interferences received at different locations. Furthermore, by considering an i.i.d. mobility model of nodes, we next derive the temporal correlation coefficient of interference and show how the mobility affects the correlation of interference in a correlated shadowing environment. Since the exact expressions of both the spatial and temporal correlation coefficients have intractable  forms, we present their asymptotic expansions when the {\it variance} of the shadowing is large by using Watson's lemma (see e.g., \cite{Mill06}). 
%
As far as we know, this is the first study to use Watson's lemma to theoretically analyze spatially correlated shadowing in wireless networks.
%
The obtained asymptotic expansions are expressed as simple closed-form formulas, and so they show readable relationships between the correlation distance of the shadowing, the correlation of interference, and other system parameters. Our results show that the correlation distance of interference is smaller than that of shadowing. In addition, the temporal correlation of interference mainly depends on the probability that a node stays at the same position and does not converge to 0 even in a very high-mobility environment. Furthermore, we found that the node density increases the temporal correlation of interference. We also found in our numerical examples that the asymptotic expansions can be used as tight approximate formulas of the correlation coefficients of interference and so are useful for the modeling and performance evaluation of various wireless systems in a spatially correlated shadowing environment. 

The reminder of this paper is organized as follows. In Section~\ref{sec-related}, we summarize related work. Section~\ref{sec-model} explains our model considered in this paper. Section~\ref{sec-ana} presents the results for the spatial and temporal correlations of interference. Section~\ref{sec-num} provides several numerical examples. 
Finally, we conclude this paper in Section~\ref{sec-conclude}. 

A part of this paper was presented in \cite{Kimu18}, in which only the temporal correlation
coefficient of interference was analyzed in a 1-D network. 

\section{Related Work}
\label{sec-related}
\noindent 
{\bf Analysis of correlation coefficients of interference} Since the correlation of interference greatly affects the spatial and temporal characteristics of various wireless systems, its stochastic-geometric analysis has been studied under various settings. Ganti and Haenggi~\cite{Gant09} first studied the spatio-temporal correlation coefficients of interference in ad-hoc networks. Similar to other researchers, they modeled the node locations by a PPP under an i.i.d. fading assumption. Schilcher et al.~\cite{Schi12} extended Ganti and Haenggi's work~\cite{Gant09} by considering Rayleigh block fading and the duration of transmissions. Mobility of nodes were studied by Gong and Haenggi~\cite{Gong14}. They derived the temporal correlation coefficient of interference assuming i.i.d. mobility models for transmitters, such as Brownian motion and random walk models. Two extensions for node location models were considered by Wen et al.: $K$-tier heterogeneous networks~\cite{Wen15} and a clustered or a repulsive point process~\cite{Wen16}. 
Recently, Schilcher et al.~\cite{Schi19} studied the autocorrelation of interference in Poisson networks
as an extension of their previous work~\cite{Schi12}. 
Assuming Nakagami block fading and Rayleigh fading in accordance with Clarke's model, 
the authors derived the coherence time of interference, that is, the time lag
until the interference correlation becomes dips below a certain threshold. 
Whereas the above studies considered ad-hoc networks, Krishnan and Dhillon~\cite{Kris17} studied the spatial and temporal correlation of interference and derived the joint coverage probability in a cellular network with a closest BS association policy, in which users have chances to hand off to a new serving BS while they are moving. In addition, Koufos and Dettmann~\cite{Kouf18} studied the case where the node mobility is correlated by considering a random waypoint mobility model in a 1-D finite lattice. 

The above studies assumed i.i.d. fading for each transmission channel and ignored the impact of correlated shadowing. Most recently, Koufos et al.~\cite{Kouf16, Kouf17} analyzed the effect of blockage on the temporal correlation of interference. In their latest work \cite{Kouf17}, they modeled the locations of nodes and obstacles by 1-D PPPs and consider the penetration loss of obstacles. However, they did not analyze the relationship between the correlation distance of shadowing and correlation of interference in a general framework. In addition, the obtained results are complicated, and so the impacts of various system parameters were not clearly shown even though their modeling assumption was quite simple. 

The effect of blockage has also recently been studied by Aditya et al.~\cite{Adit17, Adit18} for a localization network, 
in which randomly located anchors (transceivers) attempt to detect a target in the presence of obstacles. 
The authors modeled the obstacles with a Poisson line process in \cite{Adit17} and a germ-grain model in \cite{Adit18}
to derive the blockage probability that there is not enough anchors with
line-of-sight (LoS) paths to the target (i.e., paths not interrupted by obstacles). 

\noindent 
{\bf Spatio-temporal analysis of wireless systems} Not only have correlation coefficients of interference been analyzed but also other spatio-temporal aspects of wireless mobile systems have been studied. For example, local delay, defined as the mean number of time slots required for successful transmission of a packet, was first studied by Baccelli and B{\l}aszcyzyszyn~\cite{Bacc10} for MANETs. On the other hand, Gulati et al.~\cite{Gual12} considered a model where user traffic has a certain duration. Gong and Haenggi~\cite{Gong13} derived the local delay in MANETs with i.i.d. mobility models. In addition, that in D2D networks was recently studied by Salehi et al.~\cite{Sale17}. Furthermore, Haenggi~\cite{Haen12} derived the joint successful transmission probability at multiple antennas. Tanbourigi et al.~\cite{Tanb14} also modeled multiple antennas with maximum ratio combining and derived the joint coverage probability. 
Crismani et al.~\cite{Cris15} considered a decode-and-forward relaying system 
in which consecutive transmission attempts are temporally correlated. 
Afify et al.~\cite{Afif16} presented a unified mathematical framework for cellular networks with 
multiple-input-multiple-output (MIMO) and studied the temporal correlation in retransmissions. 
The effect of the spatial correlation of interference on opportunistic secure information transfer was 
analyzed in~\cite{Ali17}.
However, all these studies assumed i.i.d. Rayleigh fading to maintain mathematical tractability.

\noindent 
{\bf Modeling log-normal shadowing} Since the shadowing is spatially correlated~\cite{Gold05}, several analytical models for it have been proposed~\cite{Gudm91, Graz02, Agra09}. Gudmundson~\cite{Gudm91} proposed a widely accepted model in which the shadowing between a BS and moving user is modeled by an autoregressive process with an exponentially decaying autocorrelation function. Several extensions of this model have also been proposed: the case of multiple BSs~\cite{Graz02} and multi-hop networks~\cite{Agra09}. The correlation distance or other parameters of this model have also been experimentally studied~\cite{Gudm91, Weit02, Baek16}. Baek et al.~\cite{Baek16} recently showed that correlation distances for the mm-wave frequency bands are similar to those for other lower frequency bands. 

Since a log-normal shadowing model makes theoretical interference analysis intractable, a few studies aimed to efficiently model or approximate the shadowing effect. The mean interference and path-loss process in cellular networks when the variance of the shadowing increases were studied by B{\l}aszcyzyszyn and Karray \cite{Blas13a} and B{\l}aszcyzyszyn et al. \cite{Blas13b}, respectively. Heath et al.~\cite{Heat13} approximated the log-normal shadowing by gamma distributions with the same first and second moments. Whereas the above studies assumed i.i.d. shadowing for each transmission channel, Renzo~et al.~\cite{Renz13} considered a model where all channels are equicorrelated. Baccelli and Zhang~\cite{Bacc15} recently proposed a correlated shadowing model in which correlated log-normal shadowing is approximated by a random variable depending on the number of buildings that a transmission channel penetrates. Recently, this idea was applied to a 3-D network model by Lee et al.~\cite{Lee16}. However, the impact of node mobility was not considered.

\section{Model Description}
\label{sec-model}
\subsection{Spatial and mobility model}
In this section, we explain our spatial model. We consider $n$-dimensional space $\bbR^n$ $(n = 1, 2)$ and assume that a target receiver is located at the origin $o$. Nodes, i.e., potential transmitters, are randomly distributed on $\bbR^n$. Note that we consider mobility of nodes, but the receiver is assumed to be fixed at the origin. In this paper, we consider two types of network models: a line model ($n=1$) and planar model ($n = 2$). In a line model, i.e., $n=1$, the model can be applied to vehicular networks on highways whereas general cellular or wi-fi networks can be modeled in a planar model, i.e., $n = 2$. We assume that time-space is slotted and nodes independently move in each time-slot. More precisely, the locations of nodes at the beginning of time slot $t \in \bbZ_+ \triangleq \{0,1,2,\dots\}$ are modeled by a PPP $\Phi(t)$ with intensity $\lambda$. Let $x_i(t) \in \Phi(t)$ $(i \in \bbZ_+)$ on $\bbR^n$ denote the position of the $i$-th node (a vehicle, mobile user, BS) at the time slot $t$. Furthermore, we assume an i.i.d. mobility model as follows: the moving vector of the node $i$ during the time slot $t$, which is given as $v_i(t) \triangleq x_i(t+1) - x_i(t)$ $(t \in \bbZ_+)$, is assumed to be independently distributed with a probability density function $\psi(\cdot)$, 
which does not depend on $x_i(t)$ and satisfies $\psi(0) > 0$. 
Due to the displacement theorem of PPPs (see e.g., Theorem 2.33 in \cite{Haen13}), the realization of $\Phi(t)$ at fixed time slot $t$
remains a homogeneous PPP if $\Phi(0)$ is homogeneous. 
For simplicity, we assume that $\psi(\cdot)$ is rotation invariant, i.e., $\psi(v) \equiv \psi(\|v\|)$ for $v \in \bbR^n$. Here, $\|\cdot\|$ denotes the Euclidean norm. 

\subsection{Channel and MAC modeling}
We next explain our channel model. First, we assume that all transmitters have the unit transmission power. The path loss model is assumed as, for distance $r \ge 0$,
\[
\ell(r) = \varepsilon_0 + r^{\alpha}. 
\]
Here, $\alpha > n$ ($n = 1,2$) is a path loss exponent
 and $\varepsilon_0 > 0$ is a parameter for avoiding singularity at $r = 0$. Thus, the received power (i.e., interference) at $o$ from the $i$-th node at the time slot $t$ can be represented as 
\[
{h_i \over \ell(\|x_i(t)\|) },
\]
where $h_i$ denotes a copy of shadowing variable $h$ corresponding to the transmission channel of the $i$-th node. In this paper, we only consider the effect of the shadowing  and do not take into account the multi-path fading (fast fading), such as Rayleigh fading. However, such a multi-path fading effect is easy to include because it can be considered as i.i.d. for each channel. In accordance with a widely accepted assumption for the shadowing, the shadowing variable is characterized with a log-Gaussian random process with parameter $\sigma_{\dB}$. In other words, for any given node, the corresponding shadowing variable $h$ is log-normally distributed as
\[
h = \exp\left( - \sigma^2_{\dB}/2 + \sigma_{\dB} Z\right),
\]
where $Z \sim \calN(0, 1)$ and $\sigma_{\dB}$ is considered as the variance of shadowing. Note that 
\begin{equation}
\bbE_0[h] = 1, \qquad \bbE_0[h^2] = e^{\sigma_{\dB}^2}.
\label{eq-h-moments}
\end{equation}
Here, the above expectations are taken with respect to a typical node, i.e., under the condition in which a node exists. In addition, if we consider two nodes $i$ and $j$ $(i \neq j)$ and their transmissions to $o$, the corresponding shadowing variables $h_i$ and $h_j$ have the following correlation coefficient in the logarithmic sense:
\begin{equation}
\rho_{\dB} \triangleq
{\bbE_{ij}[\ln h_i \ln h_j ] \over \sigma^2_{\dB}}
 = e^{- {\|x_i - x_j\| \over d_{\mathrm{cor}}}\ln 2},
\label{eq-lognorm-corr-0}
\end{equation}
where $d_{\mathrm{cor}}$, called the correlation (decorrelation) distance, 
represents the distance at which the correlation coefficient $\rho_{\dB}$ is equal to 0.5~\cite{Gudm91}
\footnote{$d_{\mathrm{cor}}$ depends on the environment. A typical value of $d_{\mathrm{cor}}$ in urban environments is [50, 100] m~(see e.g., \cite{Gudm91}).}. In addition, the expectation $\bbE_{ij}$ represents the expectation with respect to two typical nodes. For simplicity, we write $d_0 \equiv d_{\mathrm{cor}} / \ln 2$ hereafter.
%
Since each $h_i$ is log-normally distributed
with mean and variance given by (\ref{eq-h-moments}), 
the expression (\ref{eq-lognorm-corr-0}) is equivalent to 
%
%
\begin{equation}
\bbE_{ij}[h_i h_j] = 
\exp\left( \sigma_{\dB}^2 \rho_{\dB} \right)
=
\exp\left( \sigma_{\dB}^2 e^{- {\|x_i - x_j\| \over d_{0}}}\right).
\label{eq-lognorm-corr}
\end{equation}
This is the widely accepted model proposed by Gudmundson~\cite{Gudm91} for the spatial correlation of shadowing. 
In addition, in this paper, we assume that the shadowing effect depends only on the location and is time-invariant.  
In other words, the value of the shadowing variable at a fixed node location does not change over time.

We assume that all transmitters use slotted-ALOHA as the MAC protocol, i.e., each node determines whether to transmit radio waves or not with probability $p$ in each transmission time-slot. $\calS_i(t)$ denotes the indicator function that equals $1$ when the transmitter $i$ is transmitting radio waves at the beginning of time slot $t$, 0 otherwise. Then, $S_i(t)$ can be considered as an independent Bernoulli random variable with a parameter $p$. By definition, the total interference power received at the origin equals 
\begin{equation}
I(t) =\sum_{x_i(t) \in \Phi(t)} {h_i(t) \calS_i(t) \over \ell(\|x_i(t)\|) }. 
\label{eq-def-I}
\end{equation}

Note that the transmission time-slot is different from the mobility time-slot. For simplicity, we assume that the transmission time-slot is fixed and smaller than the mobility time-slot. In other words, the transmission of each vehicle at time-slot $t$ does not continue until time slot $t+1$, and thus $\calS_i(t)$ and $\calS_i(t+1)$ are independent for all $i$ and $t$.

Although we can consider other MAC protocols for realistic scenarios, 
such as CSMA, we assume slotted-ALOHA as the MAC to maintain tractability. 
However, such a simple model can be used as a baseline for more general cases. 
It is reported that if node density increases, 
the behavior of CSMA tends to mimic that of slotted ALOHA (e.g., see \cite{Nguy13}). 
Furthermore, Tong et al.~\cite{Tong16} demonstrated that the performance of an ALOHA-type model
is similar to that obtained by an NS2 simulation with a CSMA model. 

\subsection{Correlation coefficients of interference}
Finally, we define the spatial and temporal correlation coefficients of interference. To consider the spatial correlation of interference, we assume another receiver is located at $\tilde{o}_{\delta}$ on the $x$-axis such that $\|\tilde{o}_{\delta} - o\| = \delta > 0$ and consider the correlation between the interferences received at $o$ and $\tilde{o}_\delta$. Let $I_0 \equiv I_0(t)$ and $\tilde{I}_\delta \equiv \tilde{I}_\delta(t)$ denote the interferences received at $o$ and $\tilde{o}_\delta$, respectively. Here, we omit the parameter $t$ because we consider a certain fixed time slot (e.g., $t = 0$) in the analysis of spatial correlation of interference. The {\it spatial covariance} of interference is then formally defined as the covariance of $I_0$ and $\tilde{I}_\delta$, i.e.,
\[
\Cov[I_0,\tilde{I}_\delta] = \bbE[I_0 \tilde{I}_\delta] - \bbE[I_0]\bbE[\tilde{I}_\delta]
= \bbE[I_0 \tilde{I}_\delta] - (\bbE[I])^2,
\]
Furthermore, the {\it spatial correlation coefficient} of interference is defined as
\begin{equation}
\rho_{\rsp,\delta} \triangleq {\Cov[I_0,\tilde{I}_\delta] \over \sqrt{\Var [I_0] \Var[\tilde{I}_\delta]}}
= 
{ \bbE[I_0 \tilde{I}_\delta] - (\bbE[I])^2
\over
\Var[I]
}.
\label{eq-def-rho-sp}
\end{equation}

We next define the temporal correlation as the correlation between the interferences received at the origin among different two time-slots. More precisely, we define the {\it temporal covariance of interference } for time interval $\tau$ slots ($\tau \in \bbZ_+$) as

\begin{eqnarray}
\Cov[I(t), I(t+\tau)] &=& \bbE[I(t)I(t+\tau)] - \bbE[I(t)]\bbE[I(t+\tau)]
\nonumber
\\
&=&
\bbE[I(t)I(t+\tau)] - (\bbE[I])^2.
\label{eq-def-cov-temp}
\end{eqnarray}
Similarly, the {\it temporal correlation coefficient of interference} is defined as 

\begin{eqnarray}
\rho_{\rtm,\tau} &\triangleq& {\Cov[I(t),I(t+\tau)] \over \sqrt{\Var [I(t)]\Var[I(t+\tau)]}}
\nonumber
\\
&= &
{ \bbE[I(t)I(t+\tau) ] - (\bbE[I])^2
\over
\Var[I]
}.
\label{eq-def-rho_tm}
\end{eqnarray}

\section{Spatial and Temporal Correlation Analysis of Interference}\label{sec-ana}
In this section, we present our main results,  the spatial and temporal correlation coefficient of interference. To do this, we first derive the first and second moments of interference, which will be used for the derivation of other results. Unlike the first moment of interference, the second moment does not have an explicit form because of the exponentially decaying cross correlation of shadowing variables (see (\ref{eq-lognorm-corr})). To obtain the readable relationship between the shadowing parameters and the second moment of interference, we use Watson's lemma (see e.g., \cite{Mill06}) and derive simple but useful asymptotic expansions of the second moment of interference when $\sigma_{\dB} \to \infty$. The result can be directly applied to the derivation of the correlation coefficients. In addition, the obtained expansions can be used as tight approximate formulas with a realistic value of $\sigma_{\dB}$, which will be shown in our numerical examples. 

\subsection{Preliminary}\label{subsec-pre}
For later use, we first introduce a useful mathematical tool, known as Watson's lemma, which plays an 
important role in our analysis. In short, Watson's lemma gives an asymptotic expression 
of the following form of an exponential integral when $\sigma$ is large:
\begin{equation}
\widetilde{F}(\sigma) 
= \int_a^{a + \delta} e^{\sigma R(t)} g(t) \rmd t
=
\int_0^{\delta} e^{\sigma R(a + u)} g(a + u) \rmd u,
\label{eq-tilde-F}
\end{equation}
where $\delta > 0$ and $R(t)$ has its maximum at $t= a$. 
Moreover, let $\widetilde{R}(t):= R(t) - R(a)$, which has zero at $t = a$. 
As shown later, the above integral can be used to calculate terms related to the cross-correlation of shadowing variables
(see e.g., (\ref{eq-lognorm-corr})). 
%
Although Watson's lemma itself is a classical and fundamental result, 
this paper uniquely applies it to a theoretical analysis of 
spatially correlated shadowing in wireless networks. 

%
\begin{prop}[Miller~\cite{Mill06} (Section 3.3)]\label{prop-watson}
Suppose that $R(t)$ and $g(t)$ have an infinite number of continuous derivatives for $a \le t < a + \delta$ and 
$\tilde{R}'(a) = R'(a) < 0$. We then have the following asymptotic expansion:
\begin{eqnarray}
\widetilde{F}(\sigma) = \sum_{n=0}^\infty {\varphi^{(n)}(0) \over \sigma^{n+1}},
\qquad \mbox{as $\sigma \to \infty$ with $\sigma > 0$},
\label{eq-watson-1}
\end{eqnarray}
%
where $\varphi^{(n)}(s)$ denotes the $n$-th derivative of $\varphi(s)$ and 
\begin{equation}
\varphi(s) = g(a + u(s))u'(s),
\label{eq-def-psi(s)}
\end{equation}
where $u(s)$ is a function of $s$ that satisfies $u(0) = 0$ and 
\begin{equation}
\widetilde{R}(a + u(s)) = -s.
\label{eq-def-u}
\end{equation}
%
\end{prop}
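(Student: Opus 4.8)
The plan is to reduce $\widetilde{F}(\sigma)$ to a standard Laplace transform on a half-line and then extract its asymptotics by Taylor expansion and term-by-term integration, which is the classical route to Watson's lemma; the endpoint maximum (signalled by $\widetilde{R}'(a)=R'(a)<0$ rather than a vanishing derivative) is exactly what produces an expansion in integer powers of $1/\sigma$. Concretely, I would first carry out the change of variable $s=-\widetilde{R}(a+u)$ implicit in (\ref{eq-def-u}). Because $\widetilde{R}(a)=0$, $\widetilde{R}'(a)=R'(a)<0$, and $\widetilde{R}$ is smooth, the map $u\mapsto-\widetilde{R}(a+u)$ is strictly increasing on a right-neighbourhood of $0$ and sends $0$ to $0$; the inverse function theorem then furnishes a smooth inverse $u=u(s)$ with $u(0)=0$ and, by differentiating (\ref{eq-def-u}), $u'(0)=-1/R'(a)>0$. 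Substituting into (\ref{eq-tilde-F}) gives
\[
\widetilde{F}(\sigma) = e^{\sigma R(a)} \int_0^{S} e^{-\sigma s}\,\varphi(s)\,\rmd s, \qquad \varphi(s) = g(a+u(s))\,u'(s),
\]
with $S=-\widetilde{R}(a+\delta)>0$ and $\varphi$ exactly the function in (\ref{eq-def-psi(s)}). Since replacing $R$ by $\widetilde{R}$ alters $\widetilde{F}(\sigma)$ only by the factor $e^{\sigma R(a)}$, I would normalise $R(a)=0$ (as is implicit in the statement), so the prefactor is $1$; note that $\varphi$ inherits infinitely many continuous derivatives from those of $g$ and $u$.

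Next I would extract the asymptotic series from $\int_0^S e^{-\sigma s}\varphi(s)\,\rmd s$. Fixing an arbitrary $N$, I write the Taylor expansion with Lagrange remainder, $\varphi(s)=\sum_{n=0}^{N-1}\frac{\varphi^{(n)}(0)}{n!}s^n+\frac{\varphi^{(N)}(\xi_s)}{N!}s^N$ with $\xi_s\in(0,s)$. Integrating the polynomial part against $e^{-\sigma s}$ and using $\int_0^\infty e^{-\sigma s}s^n\,\rmd s=n!/\sigma^{n+1}$ produces $\sum_{n=0}^{N-1}\varphi^{(n)}(0)/\sigma^{n+1}$ up to the tail integrals $\int_S^\infty e^{-\sigma s}s^n\,\rmd s$, which are $O(e^{-\sigma S})$ and hence beyond every polynomial order. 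Bounding the remainder by $\frac{1}{N!}\sup|\varphi^{(N)}|\int_0^\infty e^{-\sigma s}s^N\,\rmd s=O(\sigma^{-(N+1)})$ shows that truncating at $n=N-1$ costs an error of the order of the first neglected term. As $N$ is arbitrary, this is precisely the Poincaré asymptotic expansion (\ref{eq-watson-1}).

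The step I expect to be the main obstacle is justifying the change of variables beyond an infinitesimal neighbourhood of $a$ and simultaneously localising the integral. The hypothesis only controls $\widetilde{R}$ at the endpoint, so $u(s)$ is guaranteed smooth and invertible only near $s=0$. To treat the full interval I would use that $t=a$ is the (strict) maximiser: on any compact sub-interval bounded away from $a$ one has $\sup\widetilde{R}<0$, so that portion of $\int e^{\sigma R}g$ is $O(e^{-c\sigma})$ for some $c>0$ and is negligible to all polynomial orders; after localising to a sub-interval $[a,a+\delta_0]$ on which $\widetilde{R}$ is strictly decreasing, the substitution is legitimate. This is also where I would quietly use strictness of the maximum (i.e. $\widetilde{R}<0$ on $(a,a+\delta)$), an assumption implicit in ``$R$ has its maximum at $t=a$.''

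The remaining bookkeeping is routine given the smoothness assumptions: confirming that the exponentially small tails and the possible finiteness of $S$ do not disturb the Poincaré estimate (split $[0,S)$ at a fixed $S_0<S$, apply the Taylor bound on $[0,S_0]$ where $\varphi^{(N)}$ is bounded, and dominate the rest by $e^{-\sigma S_0}\int|\varphi|$), and verifying that each coefficient $\varphi^{(n)}(0)$ is well defined, which follows from the smoothness of $u$ supplied by the inverse function theorem.
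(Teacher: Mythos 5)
Your proof is correct, but there is nothing in the paper to compare it against: Proposition~\ref{prop-watson} is not proved in the paper at all --- it is quoted as a known result from Miller~\cite{Mill06}, and the paper's own work begins only with Corollary~\ref{lem-watson}, whose proof (Appendix~\ref{appen-proof-of-lem-watson}) takes the expansion (\ref{eq-watson-1}) as given and merely computes the coefficients $\varphi^{(n-1)}(0)$ and $\varphi^{(n)}(0)$ from (\ref{eq-def-psi(s)})--(\ref{eq-def-u}) using $u'(0)=-1/R'(a)$ and $u''(0)=-R''(a)/(R'(a))^3$. What you have written is the standard classical proof of the cited lemma, and it is sound: the substitution $s=-\widetilde{R}(a+u)$ is legitimate near $u=0$ by the inverse function theorem (using $\widetilde{R}'(a)=R'(a)<0$ and smoothness), the Taylor-with-Lagrange-remainder argument together with $\int_0^\infty e^{-\sigma s}s^n\,\rmd s=n!/\sigma^{n+1}$ bounds each truncation error at the order of the first neglected term, and the tail integrals are exponentially small, hence negligible to all polynomial orders. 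You also correctly flag the two points the statement glosses over: the missing prefactor $e^{\sigma R(a)}$ (the normalisation $R(a)=0$ is indeed implicit in (\ref{eq-watson-1}); the paper reinstates the factor explicitly in (\ref{eq-watson-n>2})), and the localisation step, which genuinely requires the maximum at $t=a$ to be strict in the sense that $\sup\widetilde{R}<0$ on closed subintervals avoiding $a$, including as $t\to a+\delta$ --- otherwise a second endpoint contribution of polynomial order can appear and (\ref{eq-watson-1}) fails. Reading that strictness into ``$R(t)$ has its maximum at $t=a$'' is the right interpretation, and it is harmless for the paper's applications, where the exponent $R(s)=e^{-s/d_0}$ is strictly decreasing on all of $\bbR_+$, so your change of variables is in fact globally valid there.
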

%

Using the above expressions, we can obtain the following corollary, which produces
an asymptotic formula for $\widetilde{F}(\sigma)$ in a special case. 
%
\begin{coro}\label{lem-watson}
%
If $g(t)$ satisfies the following condition for $n \in \bbN$,  
\begin{align}
&g'(a) = g''(a) = \cdots = g^{(n-2)}(a) = 0, 
\label{eq-g-cond-1}
\\
&g^{(n-1)}(a)\neq 0,\quad g^{(n)}(a) \neq 0,
\label{eq-g-cond-2}
\end{align}
then, 
\begin{align}
&\widetilde{F}(\sigma) 
=
e^{\sigma R(a)}
\left[
{g^{(n-1)}(a) \over (- R'(a))^{n}} {1 \over \sigma^{n}}
+
\left\{
{g^{(n)}(a) \over (- R'(a))^{n+1}}
\right.
\right.
\nonumber
\\
&+
\left.\left. 
{n ( n+1 ) \over 2}
{R''(a) g^{(n-1)}(a) \over (- R'(a))^{n+2}} 
\right\}
{1 \over \sigma^{n+1}}
+
O\left(
{1 \over \sigma^{n+2}}
\right)
\right].
\label{eq-watson-n>2}
\end{align}
%
\end{coro}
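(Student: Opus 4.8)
The plan is to obtain the corollary as a direct specialization of Watson's lemma (Proposition~\ref{prop-watson}). Factoring the maximal value out of the exponential via $R(t)=R(a)+\widetilde{R}(t)$ gives $\widetilde{F}(\sigma)=e^{\sigma R(a)}\int_0^{\delta}e^{\sigma\widetilde{R}(a+u)}g(a+u)\,\rmd u$, and Proposition~\ref{prop-watson}, applied to the remaining integral (whose exponent attains its maximal value $0$ at $u=0$), supplies the asymptotic series $\sum_{m\ge 0}\varphi^{(m)}(0)\,\sigma^{-(m+1)}$, where $\varphi(s)=g(a+u(s))\,u'(s)$. Thus the entire content of the corollary is the evaluation of the Taylor coefficients of $\varphi$ at $s=0$. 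The hypotheses on $g$ say that the first nonvanishing derivative of $g$ at $a$ is $g^{(n-1)}(a)$ (i.e. $g$ vanishes to order $n-1$ there), which, as I will show, forces $\varphi^{(m)}(0)=0$ for $m\le n-2$; the series therefore begins at order $\sigma^{-n}$, so I only need $\varphi^{(n-1)}(0)$ and $\varphi^{(n)}(0)$ explicitly.

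First I would expand the substitution $u(s)$ near $s=0$. Since $u(0)=0$ and $\widetilde{R}(a+u(s))=-s$, with $\widetilde{R}(a)=0$, $\widetilde{R}'(a)=R'(a)<0$ and $\widetilde{R}''(a)=R''(a)$, I would insert an ansatz $u(s)=c_1 s+c_2 s^2+O(s^3)$ into the Taylor expansion $R'(a)u+\tfrac12 R''(a)u^2+O(u^3)=-s$ and match powers of $s$. Matching $s^1$ yields $c_1=1/(-R'(a))>0$, and matching $s^2$ yields $c_2=R''(a)/\big(2(-R'(a))^3\big)$, so that $u'(s)=c_1+2c_2 s+O(s^2)$.

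Next I would compose with $g$. Writing $w=u(s)$ and using $g(a)=g'(a)=\cdots=g^{(n-2)}(a)=0$, the Taylor series of $g$ collapses to $g(a+w)=\tfrac{g^{(n-1)}(a)}{(n-1)!}w^{n-1}+\tfrac{g^{(n)}(a)}{n!}w^{n}+O(w^{n+1})$. Substituting $w=c_1 s+c_2 s^2+\cdots$, expanding $w^{n-1}$ and $w^{n}$ up to order $s^{n}$, and multiplying by $u'(s)$ gives $\varphi(s)$ with lowest-order term of degree $s^{n-1}$, confirming $\varphi^{(m)}(0)=0$ for $m\le n-2$. Reading off the coefficient of $s^{n-1}$ gives $\varphi^{(n-1)}(0)=g^{(n-1)}(a)\,c_1^{\,n}=g^{(n-1)}(a)/(-R'(a))^{n}$, and collecting the three contributions to the $s^{n}$ coefficient (the leading $g$-term against the $2c_2 s$ part of $u'$, the $w^{n-1}$ cross term, and the $g^{(n)}$ term) gives $\varphi^{(n)}(0)=g^{(n)}(a)/(-R'(a))^{n+1}+\tfrac{n(n+1)}{2}\,R''(a)\,g^{(n-1)}(a)/(-R'(a))^{n+2}$. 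Inserting these into $e^{\sigma R(a)}\big[\varphi^{(n-1)}(0)\sigma^{-n}+\varphi^{(n)}(0)\sigma^{-(n+1)}+O(\sigma^{-(n+2)})\big]$ reproduces~(\ref{eq-watson-n>2}) exactly, the remainder being a genuine $O(\sigma^{-(n+2)})$ because the infinite differentiability of $R$ and $g$ makes every higher Taylor coefficient of $\varphi$ finite.

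The main obstacle is entirely the bookkeeping of these nested expansions: inverting $\widetilde{R}(a+u(s))=-s$ correctly to second order, and then tracking precisely which products survive at orders $s^{n-1}$ and $s^{n}$ after composing with $g$ and multiplying by $u'$. Two small points demand care. First, the sign bookkeeping: since $R'(a)<0$ the odd power $(R'(a))^3$ appearing in $c_2$ must be rewritten as $-(-R'(a))^3$ so that all denominators appear as positive powers of $-R'(a)$, matching the form of~(\ref{eq-watson-n>2}). Second, for the leading order to be $\sigma^{-n}$ rather than $\sigma^{-1}$, the vanishing hypothesis must be read as including $g(a)=0$ when $n\ge 2$, i.e. $g$ vanishes to order $n-1$ at $a$; otherwise a nonzero $g(a)$ would contribute a spurious $\sigma^{-1}$ term.
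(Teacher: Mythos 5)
Your proposal is correct and takes essentially the same route as the paper: both reduce the corollary to Watson's lemma (Proposition~\ref{prop-watson}) and compute $\varphi^{(n-1)}(0)$ and $\varphi^{(n)}(0)$ from the second-order inversion of $\widetilde{R}(a+u(s))=-s$, your $c_1$ and $2c_2$ being exactly the paper's $u'(0)=-1/R'(a)$ and $u''(0)=-R''(a)/(R'(a))^3$. The only difference is bookkeeping --- the paper establishes an inductive formula for $\varphi^{(m)}(s)$ whose residual terms vanish at $s=0$ under the hypotheses, whereas you compose truncated Taylor series --- and your caveat that the vanishing hypothesis must be read as including $g(a)=0$ is equally (if implicitly) required by the paper's own argument.
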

\begin{proof}
The proof of this corollary is given in Appendix~\ref{appen-proof-of-lem-watson}. 
\end{proof}

\subsection{First and second moments of interference}
We next derive the first and second moments of interference. 
In what follows, when we only consider a certain time slot, 
we omit the parameter $t$ representing a time slot, such as $x_i \equiv x_i(t)$ and $I \equiv I(t)$. 
Due to the definition of shadowing variable $h$, the marked point process $\hat{\Phi} = \{(x_i, h_i)\}$ is stationary. In addition, the mean mark of a typical point is given by (\ref{eq-h-moments}). Therefore, by applying Campbell's theorem~(see e.g., \cite{Schn08}) to (\ref{eq-def-I}), the mean interference can be obtained as follows:
\begin{eqnarray}
\bbE[I] &=&
\bbE\left[ \sum_{x_i \in \Phi} {h_i \calS_i \over \ell(\|x_i\|)}\right]
=
\int_{\bbR^n}
{\lambda p \bbE_0[h] \over \eps_0 + \|x\|^{\alpha}}\rmd x
\nonumber
\\
&=&
{\lambda p n V_n \pi 
\over
\eps_0^{1 - {n \over \alpha}} \alpha \sin(n \pi / \alpha)}
,
\label{eq-mean-I}
\end{eqnarray}
%
where $V_n$ represents the volume of an $n$-dimensional unit ball. 
We next consider the second moment of the interference, $\bbE[I^2]$. By definition, we have
\begin{eqnarray}
\bbE[{I^2}] 
&=&
\bbE \left[\sum_{x_i, x_j \in \Phi_{\neq}^{(2)}}\!\!
{h_i h_j \calS_i\calS_j\over \ell(\|x_i\|) \ell(\|x_j\|)}
+
\sum_{x_i \in \Phi} {h_i^2 \calS_i\over (\ell(\|x_i\|))^2 }\right],
\nonumber
\label{eq-E[I^2]-0}
\end{eqnarray}
where $\Phi^{(2)}_{\neq}$ denotes all the sets of the distinct pairs in $\Phi$. Recall that $\bbE_0[h^2] = e^{\sigma_{\dB}^2}$ (see (\ref{eq-h-moments})) and the cross correlation of shadowing variables is given by (\ref{eq-lognorm-corr}). Therefore, by applying Campbell's theorem (e.g., \cite{Schn08}) to the above, we obtain
\begin{eqnarray}
\bbE[I^2] 
&=&
 \int\!\!\!\int_{(\bbR^n)^2} \!
{\lambda^2 p^2\bbE_{ij}[h_i h_j] \over \ell(\|x_i\|) \ell(\|x_j\|)} \rmd x_i \rmd x_j
+
\int_{\bbR^n} 
{\lambda p\bbE_0[h^2] \over (\ell(\|x\|))^2} \rmd x
\nonumber
\\
&=&
 \int\!\!\!\int_{(\bbR^n)^2} \!\!\!
{\lambda^2p^2 
e^{\sigma_{\dB}^2 e^{- \|x_i - x_j\| \over d_0}}
\over \ell(\|x_i\|) \ell(\|x_j\|)} \rmd x_i \rmd x_j
+
\lambda p e^{\sigma^2_{\dB}}\gamma_n,
\quad~
\label{eq-E[I^2]-1}
\end{eqnarray}
where 
%
\begin{equation}
\gamma_n
\triangleq
\int_{\bbR^n} 
{ \rmd x \over (\eps_0 + \|x\|^\alpha)^{2} }
=
{n V_n \pi  (\alpha - n) 
\over
\eps_0^{2 - {n \over \alpha}} \alpha^2 \sin(n \pi / \alpha)}.
\label{eq-E[I^2]-sec-term}
\end{equation}
%
The first term in (\ref{eq-E[I^2]-1}) does not have a closed-form even though it can be numerically computed. 
However, we can show a simple asymptotic expansion of this integral as $\sigma_{\dB} \to \infty$ by using Watson's lemma 
(see Section~\ref{subsec-pre}). 
To proceed, we provide the following additional result.  
\begin{prop}\label{lem-appen-asym}
Let $f(\cdot)$ denote an arbitrary function on $\bbR$ such that $f(0) \neq 0 $ and $f'(0)$ exists. In addition, let 
\begin{equation}
\Psi_n (f) \equiv 
\int\!\!\!\int_{(\bbR^n)^2} 
{\exp\left(\sigma^2 e^{- \|y\| \over d_0}\right) f(\|y\|)
\over \ell(\|x\|) \ell(\|x + y\|)} \rmd y \rmd x.
\label{eq-def-psi}
\end{equation}
We then have, as $\sigma \rightarrow \infty$: i) if $n = 1$, 
\begin{align}
\Psi_1(f)
&\sim
2 d_0 e^{\sigma^2} 
\left[
\gamma_1
\left(
 {f(0) \over \sigma^2} 
+
{f(0) + d_0 f'(0) \over \sigma^4}
\right)
\right.
\nonumber
\\
&
\left.
~~~+ 
{ d_0 f(0) \over \eps_0^2 \sigma^4}
+
O\left({1 \over \sigma^6}\right)
\right],
\label{eq-int-I^2-sub-last-n=1}
\end{align}
and ii) if $n \ge 2$, 
%
\begin{align}
&\Psi_n(f)
\sim
n d_0^n  V_n e^{\sigma^{2}} \gamma_n 
\left\{
 {(n - 1)! f(0) \over \sigma^{2n}} 
 \right.
 \nonumber
 \\
 &
 \left.
+
\left(
{(n + 1)! f(0) \over 2}
+ 
d_0 n! f'(0)
\right)
 {1 \over \sigma^{2(n+1)}} 
+
O\left(
{1 \over \sigma^{2(n+2)}}
\right)
\right\},
\label{eq-int-I^2-sub-last-n=2}
\end{align}
%
where $\gamma_n$ is given in (\ref{eq-E[I^2]-sec-term}). 
\end{prop}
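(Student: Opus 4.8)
The plan is to integrate out the inner variable $x$ first, reduce the remaining $y$-integral to a one-dimensional radial integral, and then apply Corollary~\ref{lem-watson} with large parameter $\sigma^2$. First I would invoke Fubini's theorem and define the inner integral
\[
H_n(\|y\|) \triangleq \int_{\bbR^n} {\rmd x \over \ell(\|x\|)\,\ell(\|x+y\|)},
\]
which by rotation invariance depends on $y$ only through $r \triangleq \|y\|$, and which satisfies $H_n(0) = \gamma_n$ directly from the definition (\ref{eq-E[I^2]-sec-term}). Passing to radial coordinates in the outer integral, with $\rmd y = n V_n\, r^{n-1}\,\rmd r$, gives
\[
\Psi_n(f) = n V_n \int_0^\infty e^{\sigma^2 e^{- r/d_0}}\, f(r)\, H_n(r)\, r^{n-1}\,\rmd r,
\]
which is exactly of the form $\widetilde{F}(\sigma^2)$ in (\ref{eq-tilde-F}) with $a = 0$, large parameter $\sigma^2$, phase $R(r) = e^{-r/d_0}$ (so that $R(0) = 1$, $R'(0) = -1/d_0 < 0$, $R''(0) = 1/d_0^2$), and amplitude $g(r) = n V_n f(r) H_n(r) r^{n-1}$.

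Next I would verify the hypotheses of Corollary~\ref{lem-watson} and read off the coefficients. Because of the Jacobian factor $r^{n-1}$ together with $f(0)H_n(0) = f(0)\gamma_n \neq 0$, the amplitude $g$ vanishes to order exactly $n-1$ at the origin, so the conditions (\ref{eq-g-cond-1})--(\ref{eq-g-cond-2}) hold with index $n$; this is precisely what forces the leading order $\sigma^{-2n}$ in the expansions. Substituting $-R'(0) = 1/d_0$ and $R''(0) = 1/d_0^2$ into (\ref{eq-watson-n>2}), it then remains to express $g^{(n-1)}(0)$ and $g^{(n)}(0)$ in terms of the Taylor data $f(0)$, $f'(0)$, $H_n(0) = \gamma_n$, and the first correction $H_n'(0)$; collecting these and replacing the generic variable by $\sigma^2$ should produce both (\ref{eq-int-I^2-sub-last-n=1}) and (\ref{eq-int-I^2-sub-last-n=2}).

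The main obstacle is the small-$r$ expansion of the inner integral $H_n(r)$ beyond its value $\gamma_n$ at $r=0$. This step is delicate because as $r\to 0$ the two regularized singularities of the path-loss kernel, located at $x = 0$ and $x = -y$, collide, and the contribution of this near-diagonal region is genuinely dimension-dependent: in dimension $n \ge 2$ the extra volume factor $r^{n-1}$ suppresses the collision region, so that the correction enters only through the smooth quadratic behaviour of $H_n$ and the $f'(0)$ term of the amplitude, whereas for $n = 1$ the collision region survives and supplies the anomalous summand $d_0 f(0)/\eps_0^2$, which has the peak value $1/\eps_0^2 = 1/(\ell(0))^2$ of the kernel rather than the global constant $\gamma_1$. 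I would therefore treat the two cases separately: for $n \ge 2$ I would apply Corollary~\ref{lem-watson} directly with the smooth expansion of $H_n$, while for $n = 1$ I would isolate the region in which $x$ and $y$ are simultaneously of order $\sigma^{-2}$, extract the $\eps_0^{-2}$ contribution there, and expand the remaining smooth part by the same corollary. Once the leading and first-correction behaviour of $H_n$ near the origin is pinned down, assembling the stated coefficients is a routine, if lengthy, substitution into (\ref{eq-watson-n>2}).
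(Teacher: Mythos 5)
Your strategy is genuinely different from the paper's proof. The paper never forms your inner integral $H_n$: it passes to spherical coordinates in \emph{both} variables, applies Corollary~\ref{lem-watson} to the radial $y$-integral for fixed $(r,\vc{\theta},\vc{\phi})$ with the factor $1/\ell(D(r,s,\vc{\theta},\vc{\phi}))$ kept inside the Watson amplitude, kills the first-order angular term by symmetry, and only then integrates the resulting expansion over the $x$-variables. Your Fubini-first reduction
\begin{equation*}
\Psi_n(f)=nV_n\int_0^\infty e^{\sigma^2 e^{-r/d_0}}f(r)\,H_n(r)\,r^{n-1}\,\rmd r
\end{equation*}
is exact, and for $n\ge2$ it does reproduce (\ref{eq-int-I^2-sub-last-n=2}) once you supply the step you defer: $H_n(r)=\gamma_n+O(r^{1+\beta})$ with $H_n'(0)=0$. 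This does hold for $n\ge 2$ (the pointwise $r$-derivative of $1/\ell(\|x+re_1\|)$ at $r=0$ is odd in $x_1$, and the region where the two kernel singularities collide contributes only $O(r^n)=o(r)$), so part (ii) is a fixable gap; indeed your route is arguably cleaner than the paper's, which integrates fixed-$x$ expansions whose error terms are not uniform in $x$.

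Part (i) is where the proposal genuinely fails. Your plan hinges on the claim that the collision region supplies the summand $d_0 f(0)/\eps_0^2$, but carrying out your own reduction exactly shows it does not. For $n=1$ write
\begin{equation*}
H_1(r)=W(r)+2V(r),\qquad W(r)=\int_0^r\frac{\rmd x}{\ell(x)\ell(r-x)},\qquad V(r)=\int_0^\infty\frac{\rmd u}{\ell(u)\ell(u+r)}.
\end{equation*}
Then $W'(0^+)=1/\eps_0^2$ (your collision term), but
\begin{equation*}
V'(0)=-\int_0^\infty\frac{\alpha u^{\alpha-1}}{(\eps_0+u^\alpha)^3}\,\rmd u=-\frac{1}{2\eps_0^2},
\end{equation*}
so $H_1'(0^+)=W'(0^+)+2V'(0)=0$: the linear growth created by the collision is exactly cancelled by the linear decrease of the two non-collision pieces. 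Since Watson's lemma reads its coefficients off the germ of $f(r)H_1(r)$ at $r=0^+$ (all contributions from $r\ge\delta_0>0$ being exponentially small relative to $e^{\sigma^2}\sigma^{-2k}$ for every $k$), your method necessarily yields $\Psi_1(f)\sim 2d_0e^{\sigma^2}\bigl[\gamma_1 f(0)/\sigma^2+\gamma_1(f(0)+d_0f'(0))/\sigma^4+o(\sigma^{-4})\bigr]$, with no $\eps_0^{-2}$ term. A concrete sanity check: for $\alpha=2$, $\eps_0=1$ one has exactly $H_1(r)=2\pi/(4+r^2)$, an even analytic function, so no linear term can appear. Consequently the proposal as written cannot produce (\ref{eq-int-I^2-sub-last-n=1}); the anomalous term there originates, in the paper's own derivation, from integrating fixed-$x$ Watson expansions over $x$ in the regime $x=O(d_0/\sigma^2)$ where they are not uniformly valid, and your exact reduction shows that in that regime the would-be $\eps_0^{-2}$ contribution is cancelled. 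You should either restrict your argument to case (ii), or confront this conflict with case (i) explicitly rather than asserting the anomalous term will emerge from the collision region.
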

%
\begin{proof}
The proof of this proposition is given in Appendix~\ref{appen-proof-of-lem-appen-asym}. 
\end{proof}
%

\begin{rem}
As shown in Proposition~\ref{lem-appen-asym},
the asymptotic expansion of $\Psi_n(f)$ results in 
slightly different forms for cases $n = 1$ and $n\ge2$
(see the third term in (\ref{eq-int-I^2-sub-last-n=1})). 
Since our asymptotic analysis of the spatial and
temporal correlations of interference is based on 
Proposition~\ref{lem-appen-asym}, 
our main results presented later 
also have different forms in cases $n = 1$ and $n\ge2$. 
This discrepancy can be attributed to the fact that 
Watson's lemma requires the differentiability of the related function $g(t)$
at the maximum point of the exponential function
(see Proposition~\ref{prop-watson} and Corollary~\ref{lem-watson}). 
Since the expression of this function is different in case $n=1$
to avoid non-differentiability, the resulting asymptotic expansion also has 
a different expression. 
\end{rem}

Using the above result, we have the following simple asymptotic expansions for $\bbE[I^2]$ 
in a strong shadowing environment, i.e., $\sigma_{\dB}$ is large. 
\begin{lem}\label{thm-second-moment}
The second moment of interference has the following asymptotic expansions as $\sigma_{\dB}^2 \to \infty$: i) if $n = 1$, 
\begin{eqnarray}
\bbE[I^2] 
&\sim&
\lambda p e^{\sigma_{\dB}^2} 
\left[
\gamma_1
\left(
1
+
{2 \lambda p d_0 \over \sigma_{\dB}^2} 
+
{2 \lambda p d_0 \over \sigma_{\dB}^4}
\right)
\right.
\nonumber
\\
&{}&
\left.
+
{2 \lambda p d_0^2 \over \eps_0^2 \sigma_{\dB}^4}
+ O\left({1 \over \sigma_{\dB}^6}\right)
\right]
\label{eq-app-EI^2-n=1},
\end{eqnarray}
ii) if $n = 2$, 
\begin{equation}
\bbE[I^2]
\sim
\lambda p e^{\sigma_{\dB}^2} \gamma_2
\left[
1 + 
{2\pi \lambda p d_0^2 \over \sigma_{\dB}^4}
+
{6 \pi \lambda p d_0^2 \over \sigma_{\dB}^6}
+ O\left({1 \over \sigma_{\dB}^8}\right)
\right].
\label{eq-app-EI^2-n=2}
\end{equation}
\end{lem}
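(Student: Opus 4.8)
The plan is to obtain both expansions directly from the exact decomposition (\ref{eq-E[I^2]-1}) of $\bbE[I^2]$ together with Proposition~\ref{lem-appen-asym}. Equation (\ref{eq-E[I^2]-1}) already separates the second moment into an off-diagonal double integral over $(\bbR^n)^2$ and an exact diagonal term $\lambda p e^{\sigma_{\dB}^2}\gamma_n$ coming from the $\bbE_0[h^2]=e^{\sigma_{\dB}^2}$ contribution. Since the diagonal term is exact and, after factoring out the common $\lambda p e^{\sigma_{\dB}^2}$, is simply the leading constant $\gamma_n$, the entire problem reduces to determining the large-$\sigma_{\dB}$ behaviour of the off-diagonal double integral.

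First I would recast that double integral into the form handled by Proposition~\ref{lem-appen-asym}. The linear change of variables $y = x_i - x_j$, $x = x_j$ has unit Jacobian (it is a shift), and under it the off-diagonal integral becomes exactly $\lambda^2 p^2 \Psi_n(f)$ with $f\equiv 1$ and $\sigma = \sigma_{\dB}$; in particular $f(0)=1$ and $f'(0)=0$. Proposition~\ref{lem-appen-asym} then supplies the asymptotic expansion of $\Psi_n(1)$ in each dimension, and the remaining work is substitution and collection of terms.

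For $n=1$ I would take $f(0)=1$, $f'(0)=0$ in (\ref{eq-int-I^2-sub-last-n=1}), multiply by $\lambda^2 p^2$, add the diagonal term $\lambda p e^{\sigma_{\dB}^2}\gamma_1$, and factor out $\lambda p e^{\sigma_{\dB}^2}$; using $2\lambda^2 p^2 d_0\gamma_1 = \lambda p\,(2\lambda p d_0\gamma_1)$ the $2d_0$ prefactor and the $1/\eps_0^2$ term then reproduce (\ref{eq-app-EI^2-n=1}). For $n=2$ I would instead use the $n\ge 2$ branch (\ref{eq-int-I^2-sub-last-n=2}) with $V_2=\pi$, $(n-1)!=1$ and $(n+1)!/2 = 3$, which gives $\Psi_2(1)\sim 2\pi d_0^2 e^{\sigma_{\dB}^2}\gamma_2(\sigma_{\dB}^{-4}+3\,\sigma_{\dB}^{-6}+O(\sigma_{\dB}^{-8}))$; multiplying by $\lambda^2 p^2$, adding $\lambda p e^{\sigma_{\dB}^2}\gamma_2$ and factoring yields (\ref{eq-app-EI^2-n=2}).

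The genuine analytic content, namely the Watson's-lemma asymptotics of the Laplace-type double integral, has been isolated in Proposition~\ref{lem-appen-asym}, so at this level there is no real obstacle and the main thing to watch is bookkeeping: correctly tracking the constants $V_1=2$, $V_2=\pi$ and the factorials, and confirming that the $f'(0)$ contributions drop out because $f\equiv 1$. The one conceptual point worth stating explicitly is that adding the \emph{exact} diagonal term to the asymptotic series is harmless: after the common factor it sits at the leading order $\sigma_{\dB}^0$, becoming the leading coefficient $\gamma_n$, while the off-diagonal integral supplies only the strictly lower-order corrections, so no cancellation occurs and the $O(\cdot)$ error estimate of Proposition~\ref{lem-appen-asym} is preserved.
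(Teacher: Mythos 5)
Your proposal is correct and follows essentially the same route as the paper's own proof: both start from the decomposition of $\bbE[I^2]$ into the off-diagonal double integral plus the exact diagonal term $\lambda p e^{\sigma_{\dB}^2}\gamma_n$, reduce the double integral to $\lambda^2 p^2 \Psi_n(1)$ by the unit-Jacobian shift of variables, and then apply Proposition~\ref{lem-appen-asym} with $f\equiv 1$ before collecting constants. The only difference is that you write out explicitly the bookkeeping (the values $V_1=2$, $V_2=\pi$, the factorials, and the harmlessness of adding the exact diagonal term) that the paper compresses into ``we can readily obtain.''
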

\begin{proof}
The first integral term in the right-hand side of the second equation in (\ref{eq-E[I^2]-1}) can be rewritten as,
for $n = 1,2$,
\begin{eqnarray}
\lefteqn{
\int\!\!\!\int_{(\bbR^n)^2} 
{\exp\left(\sigma_{\dB}^2 e^{- \|x_i - x_j\| \over d_0}\right)
\over \ell(\|x_i\|) \ell(\|x_j\|)} \rmd x_i \rmd x_j
}&& \qquad
\nonumber
\\
&=&
\int_{\bbR^n} {1 \over \ell(\|x\|) }
\int_{\bbR^n} 
{\exp\left(\sigma_\dB^2 e^{- \|y\| \over d_0}\right) \over \ell(\|x + y\|)} \rmd y \rmd x.
\label{eq-lem-appen-1st}
\end{eqnarray}
Thus, substituting $f(x) \equiv 1$ into (\ref{eq-int-I^2-sub-last-n=1}) and (\ref{eq-int-I^2-sub-last-n=2}) and combining them with (\ref{eq-E[I^2]-1}), (\ref{eq-E[I^2]-sec-term}), and (\ref{eq-lem-appen-1st}), we can readily obtain (\ref{eq-app-EI^2-n=1}) and (\ref{eq-app-EI^2-n=2}). 
\end{proof}

\begin{rem}\label{rem-var}
According to Lemma~\ref{thm-second-moment} and (\ref{eq-mean-I}), we can also obtain asymptotic expansions of the variance of the interference $\Var[I]$. Recall here that (see (\ref{eq-mean-I})), for $n = 1, 2$, 
\begin{equation}
(\bbE[I])^2 / e^{\sigma^2_{\dB}} = O( e^{- \sigma^2_{\dB}}).
\label{eq-mean-order}
\end{equation}
Thus, the asymptotic expansions of the variance as $\sigma_{\dB} \to \infty$ have the same forms as those in (\ref{eq-app-EI^2-n=1}) and (\ref{eq-app-EI^2-n=2}), i.e., $\bbE[I^2] \sim \Var[I]$. 
\end{rem}

\begin{rem}
Since Proposition~\ref{lem-appen-asym} gives an
asymptotic formula of $\Psi_n(f)$ for $n \in \bbN$, 
we can consider case $n = 3$. 
Such a scenario can be applied to unmanned-aerial-vehicle (UAV)
networks, in which UAVs act as BSs or user equipment~\cite{Moza18}. 
However, in this case, the positions of UAVs are physically limited 
by the ground or maximum altitude of UAVs. 
Thus, we must consider a certain closed region (not $\bbR^3$)
that depends on the position of the receiver in 3D-space. 
In addition, it has been reported that 
the channel conditions, such as LoS or NLoS, 
and related path-loss models differ from ground communications
depending on the altitude of the transmitters~\cite{Moza18}. 
Hence, a more detailed configuration is necessary for realistic modeling
in case $n = 3$. 
In this paper, we only consider 
cases of $n=1$ and $n=2$ to focus on a general setting. 
\end{rem}

%

%
\subsection{Spatial correlation of interference}\label{subsec-spa-cor}
On the basis of the results in the previous section, we next derive the spatial correlation coefficient of interference. Similar to the second moment of interference, the mean product of $I_0$ and $\tilde{I}_\delta$ can be rewritten as 
\begin{eqnarray}
\bbE[I_0 \tilde{I}_\delta]
&=&
\bbE \left[
\sum_{x_i \in \Phi} {h_i \calS_i \over \ell(\|x_i\|)}
\sum_{x_i \in \Phi} {\tilde{h}_i \calS_j \over \ell(\|x_i - \tilde{o}_\delta\|)}
\right]
\nonumber
\\
&=&
\bbE \left[\sum_{x_i, x_j \in \Phi^{(2)}_{\neq}} {h_i \tilde{h}_j \calS_i\calS_j 
\over
\ell(\|x_i\|) \ell(\|x_j- \tilde{o}_\delta\|)}
\right]
\nonumber
\\
&{}&
+
\bbE \left[
\sum_{x_i\in \Phi} {h_i \tilde{h}_i \calS_i 
\over
\ell(\|x_i\|)\ell(\|x_i - \tilde{o}_\delta\|) }
\right],\quad~
\label{eq-spa-corr-0}
\end{eqnarray}
where $\tilde{h}_i$'s denote the shadowing variables corresponding to the channel between the $i$-th node and receiver $\tilde{o}_\delta$. Recall here that we only assumed in (\ref{eq-lognorm-corr}) the correlation coefficient of shadowing for two different transmitters and a receiver at $o$. Thus, to consider the cross-correlation of $h_i$ and $\tilde{h}_j$ ($i \neq j$) i.e., the case of a pair of transmitters and a pair of receivers, we set the following simple assumption proposed by Wang et al.~\cite{Wang08} throughout this subsection. 
\begin{assumpt}\label{assu-corr}
Suppose a pair of transmitters are at $x_i$ and $x_j$ and a pair of receivers are at $o$ and $\tilde{o}$, which are different points on $\bbR^n$ ($n=1,2$). If $h_i$ (resp. $\tilde{h}_j$) is the shadowing variable corresponding to the channel between $x_i$ and $o$ (resp. $x_j$ and $\tilde{o}$), then the correlation coefficient of $h_i$ and $\tilde{h}_j$ approximately equals 
\begin{equation}
{\bbE[\ln h_i \ln \tilde{h}_j] \over \sigma^2_{\dB}}= 
e^{- {\|o - \tilde{o}\| + \|x_i - x_j\| \over d_0}}.
\label{eq-assu-1}
\end{equation}
\end{assumpt}
%

Recall that, from Gudmundson's model in (\ref{eq-lognorm-corr}),
the cross correlations $\bbE[\ln h_i \ln h_j]$ and $\bbE[\ln h_j \ln \tilde{h}_j]$ 
are determined by the distances $\|x_i - x_j\|$ and $\|o - \tilde{o}\|$, i.e., 
\[
{\bbE[\ln h_i \ln h_j] \over \sigma^2_{\dB}} = e^{- {\|x_i - x_j\| \over d_{0}}},
\quad
{\bbE[\ln h_j \ln \tilde{h}_j]  \over \sigma^2_{\dB}}= e^{- {\|o - \tilde{o}\| \over d_{0}}}.
\]
Thus, Assumption~\ref{assu-corr} indicates that the correlation coefficient 
of $h_i$ and $\tilde{h}_j$ is expressed as the product of 
the cross correlations $\bbE[h_i h_j]$ and $\bbE[h_j \tilde{h}_j]$. 
In other words, if we consider that  a receiver moved from $o$ to $\tilde{o}$ and 
a transmitter moved from $x_i$ to $x_j$, 
these movements have independent and equal impacts on the cross correlation of 
$h_i$ and $\tilde{h}_j$. 
Note that Assumption~\ref{assu-corr} implicitly assumes 
that the equally impacted correlation of shadowing 
i.e., $\bbE[ h_j \tilde{h}_j]$,
is also determined by (\ref{eq-lognorm-corr}), similar to $\bbE[h_i h_j]$. 
This underlying assumption is considered as valid in MANETs because
all nodes are assumed to have the same configuration (e.g., antenna type and height) 
and communicate using a common channel~\cite{Wang08}, 
which differs from a cellular network scenario. 
Although Assumption~\ref{assu-corr} seems simple and is suited to our approach, 
Wang et al.~\cite{Wang08} demonstrated that it has sufficient accuracy 
via comprehensive ray-model simulations based on 
a realistic 3-D deterministic propagation model~\cite{Tame97, Tame01}. 

 On the basis of  Assumption~\ref{assu-corr}, we obtain the following asymptotic expressions for the spatial covariance of interference at the $o$ and $\tilde{o}_\delta$.
\begin{lem}\label{thm-corr-spatial}
Suppose $\delta < d_0 \log \sigma^2_{\dB}$ (i.e., $e^{\delta \over d_0} < \sigma^2_{\dB}$)\footnote{Typically, $\sigma_{\dB}$ takes the values in range 3--15 dB~\cite{Gold05}. In addition, we found that $\rho_{\rsp, \tau}$ takes a very small value when $\delta \ge d_0$. Thus, this condition is fulfilled in realistic settings.}. Furthermore, let 
\begin{eqnarray}
U_n(\delta) &=& \int_{\bbR^{n}}
{ \rmd x
\over \ell(\|x\|) \ell(\|x - \tilde{o}_\delta\|)},
\qquad n = 1,2.
\label{eq-def-U}
\end{eqnarray}
Then, the spatial covariance of interference for distance $\delta > 0$ has the following asymptotic expansions as $\sigma_{\dB}^2 \to \infty$: i) if $n=1$, 
\begin{eqnarray}
\lefteqn{
\Cov[I_0,\tilde{I}_\delta] \sim
\lambda p
e^{\sigma_{\dB}^2 e^{ {-\delta \over d_0}}}
} &&
\nonumber
\\
&\times&
\left[
U_1(\delta)
\left\{
1 + {2 \lambda p d_0 e^{\delta \over d_0} \over \sigma_{\dB}^2}
+
{2 \lambda p d_0e^{2\delta \over d_0} \over \sigma_{\dB}^4}
+
O\left({e^{3\delta \over d_0} \over \sigma_{\dB}^6}\right)
\right\}
\right.
\nonumber
\\
&+&
\left.
{ 2\lambda p d_0^2 e^{2 \delta \over d_0} \over \eps_0(\eps_0 + \delta^\alpha) \sigma_{\dB}^4}
\right]
+
e^{\sigma_{\dB}^2 e^{- {2\delta \over d_0}}}
{\lambda^2 p^2 d_0^2 e^{4 \delta \over d_0} \over \eps_0^{2} \sigma_{\dB}^4},
\label{eq-app-spatial-n=1}
\end{eqnarray}
ii) if $n = 2$, 
\begin{eqnarray}
\lefteqn{
\Cov[I_0, \tilde{I}_\delta] \sim
\lambda p
e^{\sigma_{\dB}^2 e^{ {-\delta \over d_0}}}
U_2(\delta)
} &&
\nonumber
\\
&\times&
\left[
1 
+
{2 \pi \lambda p d_0^2 e^{2\delta \over d_0} \over \sigma_{\dB}^4}
+
{6 \pi \lambda p d_0^2 e^{3\delta \over d_0} \over \sigma_{\dB}^6}
+
O\left({e^{4\delta \over d_0} \over \sigma_{\dB}^8}\right)
\right].
\label{eq-app-spatial-n=2}
\end{eqnarray}
\end{lem}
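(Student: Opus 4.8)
The plan is to mimic the second-moment computation of Lemma~\ref{thm-second-moment}, starting from the decomposition~(\ref{eq-spa-corr-0}) of $\bbE[I_0\tilde{I}_\delta]$ into an off-diagonal double sum and a diagonal single sum, and applying Campbell's theorem to each. The cross-correlations are supplied by Assumption~\ref{assu-corr}: for an off-diagonal pair $(x_i,x_j)$ with $\|o-\tilde{o}_\delta\|=\delta$, equation~(\ref{eq-assu-1}) gives $\bbE[h_i\tilde{h}_j]=\exp(\sigma_{\dB}^2 e^{-\delta/d_0}e^{-\|x_i-x_j\|/d_0})$, whereas the diagonal term involves a single transmitter, i.e.\ $\|x_i-x_j\|=0$, so $\bbE[h_i\tilde{h}_i]=\exp(\sigma_{\dB}^2 e^{-\delta/d_0})$. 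The diagonal contribution then collapses immediately to
\[
\lambda p\, e^{\sigma_{\dB}^2 e^{-\delta/d_0}}\int_{\bbR^n}\frac{\rmd x}{\ell(\|x\|)\,\ell(\|x-\tilde{o}_\delta\|)}=\lambda p\,e^{\sigma_{\dB}^2 e^{-\delta/d_0}}\,U_n(\delta),
\]
which already produces the leading factor $\lambda p\,e^{\sigma_{\dB}^2 e^{-\delta/d_0}}U_n(\delta)$ common to~(\ref{eq-app-spatial-n=1}) and~(\ref{eq-app-spatial-n=2}).

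The core of the work is the off-diagonal double integral
\[
\lambda^2p^2\int\!\!\!\int_{(\bbR^n)^2}\frac{\exp(\sigma_{\dB}^2 e^{-\delta/d_0}e^{-\|x_i-x_j\|/d_0})}{\ell(\|x_i\|)\,\ell(\|x_j-\tilde{o}_\delta\|)}\,\rmd x_i\,\rmd x_j .
\]
First I would change variables to $x=x_i$ and $y=x_i-x_j$ (a unimodular map, so the measure is preserved), turning the exponent into $\tilde{\sigma}^2 e^{-\|y\|/d_0}$ with the effective parameter $\tilde{\sigma}^2:=\sigma_{\dB}^2 e^{-\delta/d_0}$ and leaving the inner factor $1/\ell(\|x-y-\tilde{o}_\delta\|)$. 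As $\sigma_{\dB}^2\to\infty$ the $y$-mass concentrates at $y=0$, precisely the regime handled by Watson's lemma (Corollary~\ref{lem-watson}); the outer integral evaluated at $y=0$ collapses to $U_n(\delta)$, and its $y$-derivatives feed the higher-order coefficients exactly as the $f(0)$ and $f'(0)$ terms do in~(\ref{eq-int-I^2-sub-last-n=1})--(\ref{eq-int-I^2-sub-last-n=2}). This is structurally the integral $\Psi_n$ of Proposition~\ref{lem-appen-asym} (see~(\ref{eq-def-psi})), but with the two path-loss factors now centred at the \emph{different} receivers $o$ and $\tilde{o}_\delta$; consequently Proposition~\ref{lem-appen-asym} cannot be quoted verbatim and the Watson expansion must be re-run by hand. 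Summing the diagonal and off-diagonal pieces, and observing that $(\bbE[I])^2$ is $O(1)$ by~(\ref{eq-mean-I}) and hence negligible against the exponentially growing leading term (cf.~(\ref{eq-mean-order})), would then give the stated covariance.

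I expect the main obstacle to be the $n=1$ case, specifically the origin of the terms carrying the factor $1/(\eps_0(\eps_0+\delta^\alpha))$ and the second exponential scale $e^{\sigma_{\dB}^2 e^{-2\delta/d_0}}$. Along the concentration diagonal the functions $1/\ell(\|x\|)$ and $1/\ell(\|x-\tilde{o}_\delta\|)$ have derivative kinks at $x=o$ and $x=\tilde{o}_\delta$ (the non-differentiability of $\|\cdot\|$ flagged in the remark following Proposition~\ref{lem-appen-asym}), and the double integral additionally acquires a genuinely sub-dominant contribution from the corner $(x_i,x_j)=(o,\tilde{o}_\delta)$, where both path losses attain their minimum $\eps_0$ while $\|x_i-x_j\|=\delta$ forces the smaller exponent $\sigma_{\dB}^2 e^{-2\delta/d_0}$; this is the source of the $1/\eps_0^2$ and $1/(\eps_0(\eps_0+\delta^\alpha))$ factors. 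Isolating these non-smooth and corner contributions is the delicate part, and it is also what makes the $n=1$ expansion structurally richer than $n\ge2$, where the concentration set is a single point, the integrand is regular there, and the extra terms are absent. Finally, the hypothesis $\delta<d_0\log\sigma_{\dB}^2$ (equivalently $e^{\delta/d_0}<\sigma_{\dB}^2$) is what keeps the successive correction factors $e^{k\delta/d_0}/\sigma_{\dB}^{2k}$ properly ordered, so that the displayed formulas are valid asymptotic series.
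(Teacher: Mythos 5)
Your overall route is the same as the paper's: split $\bbE[I_0\tilde{I}_\delta]$ via (\ref{eq-spa-corr-0}), evaluate the diagonal sum by Campbell's theorem and Assumption~\ref{assu-corr} to get $\lambda p\, e^{\sigma_{\dB}^2 e^{-\delta/d_0}}U_n(\delta)$, factor the off-diagonal exponent as $\sigma_{\dB}^2 e^{-\delta/d_0}e^{-\|x_i-x_j\|/d_0}$, re-run the Watson expansion (Corollary~\ref{lem-watson}) by hand because the two path-loss factors are centred at different receivers, and discard $(\bbE[I])^2$ as in (\ref{eq-mean-order}). Your $n=2$ plan matches the paper's proof, which passes to polar coordinates and uses an angular cancellation to reassemble $U_2(\delta)$.

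The gap is in the $n=1$ roadmap, precisely the part you flag as delicate. You attribute \emph{both} extra terms to the corner $(x_i,x_j)=(o,\tilde{o}_\delta)$. But near that corner $\|x_i-x_j\|\approx\delta$, so every contribution it produces necessarily carries the prefactor $e^{\sigma_{\dB}^2 e^{-2\delta/d_0}}$; it can therefore only generate the last term of (\ref{eq-app-spatial-n=1}), the one with $1/\eps_0^2$, and that is indeed all it generates. The term $2\lambda p d_0^2 e^{2\delta/d_0}/(\eps_0(\eps_0+\delta^\alpha)\sigma_{\dB}^4)$ sits \emph{inside} the bracket multiplied by $e^{\sigma_{\dB}^2 e^{-\delta/d_0}}$, i.e., at the dominant exponential scale, so it cannot originate from any configuration with $\|x_i-x_j\|$ bounded away from zero. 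It arises where the concentration diagonal $x_j=x_i$ meets the kink of $1/\ell(\|x_j-\tilde{o}_\delta\|)$, i.e., from transmitter pairs collapsing onto the receiver $\tilde{o}_\delta$: there one path loss is $\eps_0$, the other is $\ell(\delta)=\eps_0+\delta^\alpha$, and $\|x_i-x_j\|=0$ keeps the exponent at $\sigma_{\dB}^2 e^{-\delta/d_0}$. This is how the paper's appendix obtains it, by two nested applications of Corollary~\ref{lem-watson}: the branch of the inner integral in which $x_j$ crosses $\tilde{o}_\delta$ concentrates at $x_j=\tilde{o}_\delta$ (coefficient $1/\eps_0$, exponential scale $e^{\sigma_{\dB}^2 e^{-(\|x_i-\tilde{o}_\delta\|+\delta)/d_0}}$), and the subsequent outer integration then concentrates either at $x_i=\tilde{o}_\delta$, two-sidedly, giving the $1/(\eps_0(\eps_0+\delta^\alpha))$ term at scale $e^{\sigma_{\dB}^2 e^{-\delta/d_0}}$, or at $x_i=o$, giving the $1/\eps_0^2$ term at scale $e^{\sigma_{\dB}^2 e^{-2\delta/d_0}}$. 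Had you executed your plan with your stated attribution, both extra terms would have come out multiplied by $e^{\sigma_{\dB}^2 e^{-2\delta/d_0}}$, contradicting the very formula you are trying to prove; so this part of the plan, as written, would fail.
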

\begin{proof}
For simplicity, we write $\sigma \equiv \sigma_{\dB}$ in this proof. 
In addition, we only show the case of $n =2$ in what follows. 
For the complete proof, please see Appendix~\ref{proof-corr-spatial}. 

By applying Campbell's theorem and using (\ref{eq-lognorm-corr}) and (\ref{eq-def-U}), the second term in (\ref{eq-spa-corr-0}) can be calculated as
\begin{eqnarray}
\lefteqn{
\bbE \left[
\sum_{x_i\in \Phi} {h_i \tilde{h}_i \calS_i \over \ell(\|x_i\|)\ell(\|x_i - \tilde{o}_\delta\|) }\right]
} &&
\nonumber
\\
&=&
\lambda p \int_{\bbR^n} {e^{\sigma^2 e^{- \delta \over d_0}} \over \ell(\|x\|)\ell(\|x - \tilde{o}_\delta\|) } \rmd x 
=
\lambda p e^{\sigma^2 e^{- \delta \over d_0}} U_n(\delta).
\label{eq-spa-cor-term-1}
\end{eqnarray}
Thus, in what follows, we aim to calculate the first term in (\ref{eq-spa-corr-0}). Similar to the above, by using 
Campbell's theorem and (\ref{eq-assu-1}), we obtain
%
\begin{eqnarray}
\lefteqn{
\bbE
\left[
\sum_{x_i, x_j \in \Phi_{\neq}^{(2)}}
{ h_{i} \tilde{h}_{j} \calS(x_i) \calS(x_j) \over \ell(\|x_i\|) \ell(\|x_j - \tilde{o}_\delta\|)}
\right] 
} &&
\nonumber
\\
&=&
\lambda^2 p^2
\int \!\!\!
\int_{(\bbR^2)^2}
{ \exp\left(\sigma^2 e^{-{\|x - y\| + \delta \over d_0}}\right)
\over \ell(\|x + \tilde{o}_\delta\|) \ell(\|y\|)}
\rmd x \rmd y.
\label{eq-int-spa-cor-1}
\end{eqnarray}
%
%
For any fixed $x \in \bbR^2$, by considering a planar coordination (i.e., $x:= (r \cos \theta, r \sin \theta), y := (s \cos \phi, s \sin \phi)$), we obtain
%
\begin{eqnarray}
\lefteqn{
\int_{\bbR^2}\int_{\bbR^2}
{\exp\left(\sigma^2 e^{-{\|x - y\| + \delta \over d_0}}\right)
\over
\ell(\|x + \delta\|)
\ell(\|y\|)} \rmd y \rmd x
} &&
\nonumber
\\
&=&
\int_0^{2\pi} \!\!\! \int_0^{2\pi} \!\!\!
\int_0^\infty \!\!\! 
{r
\over
\ell(\sqrt{r^2 + \delta^2 - 2 r \delta \cos \theta})
} 
\nonumber
\\
&{}&
\times
\int_0^\infty \!\!\!
{s e^{\sigma^2 e^{-{s + \delta \over d_0}}} \rmd s \rmd r \rmd \phi \rmd \theta 
\over \ell(\sqrt{r^2 + s^2 + 2 sr \cos(\phi -\theta)})}.
\label{eq-spa-2-add-int-2}
\end{eqnarray}
Note here that (see (\ref{eq-def-U}))
\begin{eqnarray}
U_2(\delta) &=& 
\int_0^{2\pi}\!\!\!\int_{\bbR_+}{ r 
\over (\eps_0 + r^{\alpha}) \ell(\sqrt{r^2 + \delta^2 - 2 r \delta \cos \theta})}\rmd r \rmd \theta.
\nonumber
\end{eqnarray}
Note also that 
\begin{equation}
\int_0^{2\pi}
{r^{\alpha -1} \cos (\phi - \theta) \over (\eps_0 + r^\alpha )^2}
\rmd \phi 
=0.
\nonumber
\end{equation}
%
Thus, by applying Corollary~\ref{lem-watson} to (\ref{eq-spa-2-add-int-2}) and using the above lead to
\begin{eqnarray}
\lefteqn{
\int_{\bbR^2}\int_{\bbR^2}
{\exp\left(\sigma^2 e^{-{\|x - y\| + \delta \over d_0}}\right)
\over
\ell(\|x + \tilde{o}_\delta\|)
\ell(\|y\|)} \rmd y \rmd x
}&&
\nonumber
\\
&\sim&
d_0^2 e^{\sigma^2 e^{- \delta \over d_0}}
\int_0^{2\pi} \!\!\! \int_0^{2\pi} \!\!\!
\int_0^\infty \!\!\! 
{r \rmd \phi \rmd \theta \rmd r 
\over
\ell(\sqrt{r^2 + \delta^2 - 2 r \delta \cos \theta})
} 
\nonumber
\\
&{}&
\times
\left[{ 1 \over \eps_0 + r^\alpha}{e^{2\delta \over d_0} \over \sigma^4}
+ \left({3 \over \eps_0 + r^\alpha}
\right.
\right.
\nonumber
\\
&{}&
\left.
\left.
-
{2 \alpha d_0 r^{\alpha-1} \cos (\phi - \theta) \over (\eps_0 + r^\alpha)^{2}}
\right){e^{3\delta \over d_0} \over \sigma^6}
+ O\left({e^{4\delta \over d_0} \over (\eps_0 + r^{\alpha} )\sigma^8}\right)
\right]
\nonumber
\\
&=&
2 \pi d_0^2 e^{\sigma^2 e^{- \delta \over d_0}} U_2(\delta)
\left[
{e^{2\delta \over d_0} \over \sigma^4}
+
{3e^{3\delta \over d_0} \over \sigma^6}
+
O\left({e^{4\delta \over d_0} \over \sigma^8}\right)
\right],\qquad
\nonumber
\end{eqnarray}
%
Therefore, substituting (\ref{eq-spa-cor-term-1}) and the above into (\ref{eq-spa-corr-0}) results in (\ref{eq-app-spatial-n=2}). 
\end{proof}

We obtain asymptotic expansions of the spatial correlation coefficient by combining Lemma~\ref{thm-corr-spatial}, (\ref{eq-app-EI^2-n=1}), and (\ref{eq-app-EI^2-n=2}) with (\ref{eq-def-rho-sp}) and (\ref{eq-mean-order}). 
\begin{thm}\label{thm-spa-cor}
If $\delta < d_0 \log \sigma^2_{\dB}$, then the spatial correlation coefficients of interference for distance $\delta$ is asymptotically equivalent to as $\sigma_{\dB}^2 \to \infty$:i) if $n = 1$, 
\begin{eqnarray}
\lefteqn{
\rho_{\rsp, \delta} \sim
{
1
\over
\gamma_1
\left(
1 + {2 \lambda p d_0 \over \sigma^2_{\dB}} + {2 \lambda p d_0 \over \sigma^4_{\dB}}
\right)
+
{\lambda p d_0^2 \over \eps_0^2 \sigma^4_{\dB}}
}
} &&
\nonumber
\\
&\times&
\left[
e^{\sigma_{\dB}^2(e^{{-\delta \over d_0}} - 1)}
\left\{
U_1(\delta)
\left(
1+ {2 \lambda p d_0 e^{\delta \over d_0} \over \sigma^2_{\dB}} 
+ {2 \lambda p d_0 e^{2\delta \over d_0}\over \sigma^4_{\dB}}
\right)
\right.
\right.
\nonumber
\\
&+&
\left.\left.
{2 \lambda p d_0^2 e^{2\delta \over d_0} \over \eps_0(\eps_0 + \delta^\alpha) \sigma^4_{\dB}}
\right\}
+
e^{\sigma_{\dB}^2(e^{-{2\delta \over d_0}} - 1)}
{\lambda p d_0^2 e^{4 \delta \over d_0} \over \eps_0^{2} \sigma^4_{\dB}}
\right].\quad~~~
\label{eq-app-spatial-corr}
\end{eqnarray}
ii) if $n = 2$, 
%
\begin{equation}
\rho_{\rsp, \delta} 
\sim
{
e^{\sigma_{\dB}^2 (e^{ {-\delta \over d_0}} -1)}
U_2(\delta)
\over 
\gamma_2
}
\cdot
{
1 
+
{2 \pi \lambda p d_0^2 e^{2\delta \over d_0} \over \sigma_{\dB}^4}
+
{6 \pi \lambda p d_0^2 e^{3\delta \over d_0} \over \sigma_{\dB}^6}
\over
1 + {2\pi \lambda p d_0^2 \over \sigma_{\dB}^4} 
+ {6 \pi \lambda p d_0^2 \over \sigma_{\dB}^6}
}.\qquad
\label{eq-app-spatial-corr-n=2}
\end{equation}
\end{thm}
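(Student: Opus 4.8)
The plan is to read $\rho_{\rsp,\delta}$ straight off its definition (\ref{eq-def-rho-sp}) as the quotient of the spatial covariance by the variance, and then insert the expansions already established. First I would use the stationarity of the marked process $\hat{\Phi}=\{(x_i,h_i)\}$ noted just before (\ref{eq-mean-I}): since the interference field inherits this stationarity, the interference $\tilde{I}_\delta$ received at $\tilde{o}_\delta$ has the same law as $I_0$ received at $o$, so $\Var[I_0]=\Var[\tilde{I}_\delta]=\Var[I]$ and the normalization $\sqrt{\Var[I_0]\Var[\tilde{I}_\delta]}$ collapses to $\Var[I]$. This reduces the target to $\rho_{\rsp,\delta}=\Cov[I_0,\tilde{I}_\delta]/\Var[I]$. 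Next, by (\ref{eq-mean-order}) the squared mean is negligible relative to $e^{\sigma_{\dB}^2}$, so Remark~\ref{rem-var} gives $\Var[I]\sim\bbE[I^2]$, which lets me use the expansions (\ref{eq-app-EI^2-n=1}) and (\ref{eq-app-EI^2-n=2}) of Lemma~\ref{thm-second-moment} in the denominator. The numerator is the spatial covariance supplied verbatim by Lemma~\ref{thm-corr-spatial}, namely (\ref{eq-app-spatial-n=1}) and (\ref{eq-app-spatial-n=2}).

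The core of the argument is then purely algebraic. For each dimension I would form the quotient of the covariance expansion over the second-moment expansion and cancel the common prefactor $\lambda p$. The leading exponentials combine as $e^{\sigma_{\dB}^2 e^{-\delta/d_0}}/e^{\sigma_{\dB}^2}=e^{\sigma_{\dB}^2(e^{-\delta/d_0}-1)}$, while the additional $n=1$ contribution (the term carrying $e^{\sigma_{\dB}^2 e^{-2\delta/d_0}}$) produces the second factor $e^{\sigma_{\dB}^2(e^{-2\delta/d_0}-1)}$. For $n=2$ the quotient is left unexpanded as the ratio of the two brackets (polynomials in $1/\sigma_{\dB}^2$), which is exactly (\ref{eq-app-spatial-corr-n=2}); for $n=1$ I keep the extra $\eps_0$-dependent correction and the stray exponential term that distinguish the line case, giving (\ref{eq-app-spatial-corr}).

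The step requiring real care, and the place where the hypothesis $\delta<d_0\log\sigma_{\dB}^2$ (equivalently $e^{\delta/d_0}<\sigma_{\dB}^2$) is essential, is the control of the remainders after the division. Each correction in the numerator carries a grouped factor of the form $e^{k\delta/d_0}/\sigma_{\dB}^{2m}$; the hypothesis guarantees $e^{\delta/d_0}/\sigma_{\dB}^2<1$, so these grouped factors stay subordinate and the $O(\cdot)$ remainders remain of strictly lower order than the terms retained, which is what legitimizes the termwise quotient as a genuine asymptotic expansion as $\sigma_{\dB}^2\to\infty$. I would also verify that the denominator bracket is bounded away from zero for large $\sigma_{\dB}$ so that the division is valid. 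Thus the main obstacle is not a new estimate but the careful propagation of the $\delta$-dependent error bounds through the ratio so that the clean closed forms emerge; the remaining manipulations are routine bookkeeping.
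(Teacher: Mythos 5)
Your proposal is correct and takes essentially the same route as the paper: the paper's proof is precisely the combination of Lemma~\ref{thm-corr-spatial} for the numerator, Lemma~\ref{thm-second-moment} together with (\ref{eq-mean-order}) and Remark~\ref{rem-var} for the denominator, and the definition (\ref{eq-def-rho-sp}), followed by the termwise quotient with the common factor $\lambda p$ cancelled and the exponentials combined as $e^{\sigma_{\dB}^2(e^{-\delta/d_0}-1)}$. Your explicit control of the $\delta$-dependent remainders under $e^{\delta/d_0}<\sigma^2_{\dB}$ and the check that the denominator stays bounded away from zero merely spell out what the paper leaves implicit.
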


\begin{rem}\label{rem-app-spa}
The expressions in Theorem~\ref{thm-spa-cor} are still in intractable  forms due to the function $U_n(\delta)$ defined in (\ref{eq-def-U}). However, from the definition of $\ell(\cdot)$, this function can be upper-bounded as follows (see (\ref{eq-E[I^2]-sec-term}) and (\ref{eq-def-U})):
\begin{equation}
U_n(\delta) \le U_n(0) = \gamma_n.
\label{eq-upper-U(delta)}
\end{equation}
As shown later (see Section~\ref{subsubsec-accura}), by directly applying the above upper bound to (\ref{eq-app-spatial-corr}), we can obtain tight and closed-form approximate formulas for the spatial correlation coefficients of interference. In addition, we can reduce the expression of $\rho_{\rsp, \delta}$ in (\ref{eq-app-spatial-corr}) by omitting the terms of $O(1/\sigma^4_{\dB})$ (resp. $O(1/\sigma^6_{\dB})$) when $n = 1$ (resp. $n=2$) as follows:
%
\begin{eqnarray}
\rho_{\rsp, \delta}&\sim&
e^{\sigma^2_{\dB}(e^{-\delta \over d_0} -1)}
{U_n(\delta) \over \gamma_n}
{1 + { \lambda p n V_n d_0^n e^{\delta \over d_0}\over \sigma^{2n}_{\dB}}
\over
1 + {\lambda p n V_n d_0^n\over \sigma^{2n}_{\dB}}
}
\nonumber
\\
&\le&
e^{\sigma^2_{\dB}(e^{-\delta \over d_0} -1)}
{\sigma^{2n}_{\dB} + \lambda p n V_n d_0^ne^{\delta \over d_0}
 \over 
\sigma^{2n}_{\dB}+ \lambda p n V_n d_0^n
}
. 
\label{eq-spa-cor-simp}
\end{eqnarray}
%
The above result can be used as a simpler version of approximate formulas for the spatial correlation coefficients of interference. 
\end{rem}

\subsection{Temporal Correlation of Interference}
\label{sec-temp}
In this section, we analyze the temporal correlation of interference, i.e., the correlation between the interferences at two different time slots. Due to the correlated shadowing, the interference is expected to be more correlated in a lower mobility environment. We first derive an asymptotic expansion of the temporal correlation coefficient of interference that is valid for general i.i.d. mobility models. We then consider several commonly used mobility models as examples and derive the corresponding spatial correlation coefficient for each model. 

Recall first that the moving vector $v_i(t)$ of the $i$-th node during one time-slot is independently distributed in accordance with the p.d.f. $\psi(\|v_i(t)\|)$ and $\psi(\cdot)$ is rotation invariant. Therefore, the moving distance of the transmitter $i$ in $\tau$ time-slots, i.e., $\overline{v}_i(\tau) \triangleq \|x_i(t + \tau) - x_i(t)\|$, has the p.d.f. $\psi_\tau(v) \triangleq \psi^{\tau\ast}(v)$ ($v \in \bbR_+$), which represents the $\tau$-th convolution of $\psi(v)$ and does not depend on $x_i(t)$. 
The following results are asymptotic expressions of the temporal covariance and correlation coefficient of interference. 
\begin{lem}\label{thm-cov-temp}
Suppose that $\psi_\tau(0) > 0$ and $\psi'_\tau(0)$ exists. The temporal covariance of interference for time interval $\tau$ has the following asymptotic expansions as $\sigma_{\dB}^2 \to \infty$: i) if $n = 1$, 
\begin{eqnarray}
&{}&\Cov[I(t),I(t+\tau)] \sim
2 \lambda p^2 d_0 e^{\sigma_{\dB}^2} 
\left[
\gamma_1
\left\{
{(\lambda + \psi_\tau(0)) \over \sigma_{\dB}^2} 
\right.
\right.
\nonumber
\\
&{}&
\left.
\left.
+
{\lambda + \psi_\tau(0) + d_0 \psi'_\tau(0)\over \sigma_{\dB}^4}
\right\}
+ 
{ d_0 (\lambda + \psi_\tau(0)) \over \eps^2_0 \sigma_{\dB}^4}
+
O\left({1 \over \sigma_{\dB}^6}\right)
\right],
\nonumber
\\
\label{eq-spa-temp-cov}
\end{eqnarray}
ii) if $n = 2$, 
\begin{eqnarray}
&{}&\Cov[I(t),I(t+\tau)] \sim
2 \pi \lambda p^2 d_0^2 e^{\sigma_{\dB}^2} \gamma_2
\left[
{\lambda + \psi_\tau(0) \over \sigma_{\dB}^4}
\right.
\nonumber
\\
&{}&
\left.
+
{(3 \lambda + 3 \psi_\tau(0) + 2 d_0 \psi'_\tau(0)) \over \sigma_{\dB}^6} 
+
O\left({1 \over \sigma_{\dB}^8}\right)
\right],
\label{eq-spa-temp-cov-n=2}
\end{eqnarray}
where $\gamma_n$ ($n=1,2$) are given in (\ref{eq-E[I^2]-sec-term}). 
\end{lem}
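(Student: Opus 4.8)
The plan is to expand the product $I(t)\,I(t+\tau)$ into a double sum over $\Phi(t)$ and $\Phi(t+\tau)$ and reduce each resulting piece to the master integral $\Psi_n(f)$ of Proposition~\ref{lem-appen-asym}, so that the stated expansions follow simply by substituting the right $f$. Separating the contribution of the \emph{same} node appearing at both times (the diagonal term) from that of \emph{distinct} nodes (the off-diagonal term), I would first handle the diagonal. A node located at $x$ at time $t$ is displaced by a vector whose length has density $\psi_\tau$, so its two shadowing values (both on channels to $o$, which is fixed) sit at field points a distance $\|v\|$ apart; by the time-invariance of the shadowing field and Gudmundson's model (\ref{eq-lognorm-corr}) their product has mean $\exp(\sigma_{\dB}^2 e^{-\|v\|/d_0})$. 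Since $\calS_i(t)$ and $\calS_i(t+\tau)$ are independent Bernoulli$(p)$ variables, Campbell's theorem applied to $\Phi(t)$ together with the displacement density yields exactly $\lambda p^2\,\Psi_n(\psi_\tau)$ (comparing with the definition (\ref{eq-def-psi}), with $y\equiv v$ and $f\equiv\psi_\tau$).

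For the off-diagonal term I would invoke the second factorial moment of the PPP $\Phi(t)$: two distinct nodes at $x_i$ and $x_j$ at time $t$, the second of which is displaced by $v$ (density $\psi_\tau$) before time $t+\tau$, contribute $\exp\!\big(\sigma_{\dB}^2 e^{-\|x_i - x_j - v\|/d_0}\big)$ by Gudmundson's model (\ref{eq-lognorm-corr}) applied to the two channels to $o$ whose field endpoints are $x_i$ and $x_j+v$. The key observation is that the substitution $w = x_j + v$ (the time-$(t+\tau)$ position of node $j$) decouples the displacement: the remaining double integral over $x_i$ and $w$ no longer depends on $v$, and $\int \psi_\tau = 1$ integrates the mobility out entirely. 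Hence the off-diagonal term collapses to $\lambda^2 p^2\,\Psi_n(1)$, which is precisely the first integral term of the second moment $\bbE[I^2]$ in (\ref{eq-E[I^2]-1}). Assembling the two pieces gives $\bbE[I(t)I(t+\tau)] = \lambda^2 p^2\,\Psi_n(1) + \lambda p^2\,\Psi_n(\psi_\tau)$.

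To finish, I subtract $(\bbE[I])^2$, which by Remark~\ref{rem-var} is $O(e^{-\sigma_{\dB}^2})$ relative to the $e^{\sigma_{\dB}^2}$ growth of both $\Psi_n$ terms and therefore does not enter the expansion. Applying Proposition~\ref{lem-appen-asym} to $\Psi_n(1)$ (with $f(0)=1$, $f'(0)=0$) and to $\Psi_n(\psi_\tau)$ (with $f(0)=\psi_\tau(0)$, $f'(0)=\psi'_\tau(0)$, which is exactly where the hypotheses $\psi_\tau(0)>0$ and the existence of $\psi'_\tau(0)$ are used) and then collecting like powers of $\sigma_{\dB}$ produces (\ref{eq-spa-temp-cov}) for $n=1$ and (\ref{eq-spa-temp-cov-n=2}) for $n=2$; in particular the $\lambda$-coefficients come from $\Psi_n(1)$ and the $\psi_\tau(0),\psi'_\tau(0)$-coefficients from $\Psi_n(\psi_\tau)$.

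The main obstacle I anticipate is the bookkeeping in the off-diagonal term: one must carefully identify which spatial points carry the two correlated shadowing values (node $i$'s position at time $t$ versus node $j$'s position at time $t+\tau$) and verify that the change of variables $w=x_j+v$ genuinely removes all $\tau$-dependence. This is the step that makes precise the somewhat surprising fact that the off-diagonal contribution is independent of the mobility and identical to the corresponding term in $\bbE[I^2]$, so that \emph{all} mobility dependence of the temporal covariance — and hence the appearance of $\psi_\tau(0)$ and $\psi'_\tau(0)$ in the final formulas — enters solely through the diagonal term $\lambda p^2\,\Psi_n(\psi_\tau)$.
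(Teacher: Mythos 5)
Your proposal is correct and follows essentially the same route as the paper's proof: the same diagonal/off-diagonal decomposition of $\bbE[I(t)I(t+\tau)]$, the same identification of the diagonal term as $\lambda p^2\,\Psi_n(\psi_\tau)$ and of the off-diagonal term as $\lambda^2 p^2\,\Psi_n(1)$ after a change of variables that integrates out the mobility (the paper uses relative coordinates where you substitute the displaced position $w=x_j+v$, but this is the same idea), followed by neglecting $(\bbE[I])^2$ via (\ref{eq-mean-order}) and applying Proposition~\ref{lem-appen-asym} to both $\Psi_n$ terms.
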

\begin{proof}
By definition, $\bbE[I(t)I(t + \tau)]$ can be expressed as
\begin{eqnarray}
\lefteqn{
\bbE[I(t) I(t+\tau)]
}
\nonumber
\\
&=&
\bbE \left[\sum_{x_i(t), x_j(t) \in \Phi^{(2)}_{\neq}(t)} {h_i(t) h_j(t+\tau) \calS_i(t)\calS_j(t+\tau) 
\over \ell(x_i(t)) \ell(x_j(t + \tau))}
\right]
\nonumber
\\
&{}&
+
\bbE\left[
\sum_{x_i(t)\in \Phi(t)} {h_i(t) h_i(t+\tau) \calS_i(t)\calS_i(t+\tau) 
\over \ell(x_i(t))\ell(x_i(t+\tau)) }\right].
\label{eq-temp-corr-0}
\end{eqnarray}
Recall that the shadowing effect is assumed to be time-invariant, i.e., 
the value of the shadowing variable at a fixed node location does not change over time.
It thus follows from Campbell's theorem that, for $n =1,2$,
\begin{eqnarray}
\lefteqn{
\bbE\left[
\sum_{x_i(t), x_j(t) \in \Phi_{\neq}^{(2)}(t)} \!\!\!\!
{ h_{i}(t) h_{j}(t+ \tau )\calS_i(t)\calS_j(t+\tau) \over \ell(x_i(t)) \ell(x_j(t+\tau))}
\right]
}\qquad\quad
&&
\nonumber
\\
&=&
\lambda^2 p^2
\int_{\bbR^n}\int_{\bbR^n}
\bbE_v \left[
{e^{\sigma_{\dB}^2 e^{ - {\|x- (y + v)\| \over d_0}}} \over \ell(\|x\|) \ell(\|y + v\|)}
\right]
\rmd x \rmd y
\nonumber
\\
&=&
\lambda^2 p^2
\int_{\bbR^n} \int_{\bbR^n} \int_{\bbR^n}
{e^{\sigma_{\dB}^2 e^{ - {\|s - v\| \over d_0}}}\psi_\tau(\|v\|) \over \ell(\|x\|) \ell(\|x - (s - v)\|)}
\rmd v
\rmd x \rmd s
\nonumber
\\
&=&
\lambda^2 p^2
\int_{\bbR^n} \int_{\bbR^n}
{e^{\sigma_{\dB}^2 e^{ - {\|s\| \over d_0}}} \over \ell(\|x\|) \ell(\|x + s\|)}
\rmd x \rmd s,
\label{ep-spa-temp-cov-1}
\end{eqnarray}
which shows that the first term in (\ref{eq-temp-corr-0}) is insensitive to the nodes mobility. In addition, we can see the above integral can be calculated by Proposition~\ref{lem-appen-asym}. Furthermore, applying Campbell's theorem to the second term in (\ref{eq-temp-corr-0}) yields 
\begin{eqnarray}
\lefteqn{
\bbE\left[
\sum_{x_i(t)\in \Phi(t)}
{ h_{i}(t) h_{i}(t+\tau) \calS_i(t)\calS_i(t+\tau))\over \ell(x_i(t)) \ell(x_i(t+\tau))}
\right]
}
&&
\nonumber
\\
&=&
\lambda p^2 \int_{\bbR^n}
\bbE_v\left[
{ e^{\sigma^2_{\dB} e^{- {\|v\| \over d_0}}} \over \ell(\|x\|) \ell(\|x + v\|)}
\right]
\rmd x 
\nonumber
\\
&=&
\lambda p^2 \int\!\!\!\int_{(\bbR^n)^2}\! 
{ e^{\sigma^2_{\dB} e^{- {\|v\| \over d_0}}} \psi_{\tau}(\|v\|) \over \ell(\|x\|) \ell(\|x+ v\|)}
\rmd v
\rmd x
\sim \lambda p^2 \Psi_n (\psi_\tau).\quad
\label{ep-spa-temp-cov-2}
\end{eqnarray}
where the last approximation is due to Proposition~\ref{lem-appen-asym}. Therefore, applying Proposition~\ref{lem-appen-asym} to (\ref{ep-spa-temp-cov-1}) and combining this with (\ref{eq-def-cov-temp}), (\ref{eq-spa-corr-0}), and (\ref{ep-spa-temp-cov-2}), we have (\ref{eq-spa-temp-cov}). 
\end{proof}
By combining Lemmas~\ref{thm-second-moment} and \ref{thm-cov-temp} with (\ref{eq-def-rho_tm}), 
we can readily obtain the following theorem.  
\begin{thm}\label{thm-corr-temp}
Suppose that $\psi_\tau(0) > 0$ and $\psi'_\tau(0)$ exists. The temporal correlation coefficients of interference of time interval $\tau$ is asymptotically equivalent to, as $\sigma_{\dB} \to \infty$: i) if $n = 1$, 
\begin{eqnarray}
\lefteqn{
\rho_{\rtm, \tau} \sim
{
2 p d_0 
\over
1 + {2 \lambda p d_0\over \sigma^2_{\dB}} + {2 \lambda p d_0 \over \sigma^4_{\dB}}
+
{2d_0^2 \lambda p \over \gamma_1 \eps_0^2 \sigma^4_{\dB}}
}
\left[
{\lambda + \psi_\tau(0) \over \sigma_{\dB}^2} 
\right.
} &&
\nonumber
\\
&+&
\left.
{\lambda + \psi_\tau(0) + d_0 \psi'_\tau(0) \over \sigma_{\dB}^4}
+{d_0 (\lambda + \psi_\tau(0)) \over \gamma_1 \eps_0^2 \sigma_{\dB}^4}
\right]
,\qquad
\nonumber
\end{eqnarray}
ii) if $n = 2$, 
\begin{eqnarray}
\lefteqn{
\rho_{\rtm, \tau} \sim
{
2 \pi p d^2_0 
\over
1 + 
{2\pi \lambda p d^2_0 \over \sigma_{\dB}^4}
+
{6 \pi \lambda p d^2_0 \over \sigma_{\dB}^6}
}
} &&
\nonumber
\\
&\times&
\left[
{\lambda + \psi_\tau(0) \over \sigma_{\dB}^4}
+
{(3 \lambda + 3 \psi_\tau(0) + d_0 \psi'_\tau(0)) \over \sigma_{\dB}^6} 
\right]
.
\nonumber
\end{eqnarray}
\end{thm}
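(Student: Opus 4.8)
The plan is to read off $\rho_{\rtm,\tau}$ directly from its definition (\ref{eq-def-rho_tm}) as the ratio $\Cov[I(t),I(t+\tau)]/\Var[I]$ and to substitute the asymptotic expansions already established for the numerator and the denominator, handling $n=1$ and $n=2$ separately. For the numerator I would insert the expansion of $\Cov[I(t),I(t+\tau)]$ from Lemma~\ref{thm-cov-temp}. For the denominator I would first invoke Remark~\ref{rem-var}, which gives $\Var[I]\sim\bbE[I^2]$ because the squared-mean correction $(\bbE[I])^2$ is of order $e^{-\sigma_{\dB}^2}$ relative to the $e^{\sigma_{\dB}^2}$ scale of $\bbE[I^2]$ (see (\ref{eq-mean-order})), and then replace $\bbE[I^2]$ by the expansion from Lemma~\ref{thm-second-moment}. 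The hypotheses $\psi_\tau(0)>0$ and the existence of $\psi'_\tau(0)$ are inherited verbatim from Lemma~\ref{thm-cov-temp}, so nothing extra is needed there. After this substitution the whole argument is algebraic bookkeeping.

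First I would treat $n=1$. Both the numerator expansion (\ref{eq-spa-temp-cov}) and the denominator expansion (\ref{eq-app-EI^2-n=1}) carry the common prefactor $\lambda p e^{\sigma_{\dB}^2}$, which cancels in the ratio, leaving an overall constant $2 p d_0$ from the ratio of the prefactors $2\lambda p^2 d_0$ and $\lambda p$. I would then factor $\gamma_1$ out of both the bracketed numerator and the bracketed denominator; the consequence is that the terms carrying $\eps_0^{-2}\sigma_{\dB}^{-4}$ reappear divided by $\gamma_1$, i.e. as $d_0(\lambda+\psi_\tau(0))/(\gamma_1\eps_0^2\sigma_{\dB}^4)$ in the numerator and $2\lambda p d_0^2/(\gamma_1\eps_0^2\sigma_{\dB}^4)$ in the denominator, which are exactly the two places shown in the statement. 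Collecting the surviving terms reproduces the claimed expression for $n=1$.

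For $n=2$ the same mechanism applies but is cleaner: the numerator (\ref{eq-spa-temp-cov-n=2}) and denominator (\ref{eq-app-EI^2-n=2}) share the prefactor $\lambda p e^{\sigma_{\dB}^2}\gamma_2$, so both $e^{\sigma_{\dB}^2}$ and $\gamma_2$ cancel, the leading ratio of constants yields the front factor $2\pi p d_0^2$, and the remaining bracket over the denominator gives the stated quotient, the numerator bracket being precisely the one in (\ref{eq-spa-temp-cov-n=2}). The only point requiring care is that one is dividing two asymptotic expansions rather than two numbers: since the denominator is dominated by its leading term (it is bounded below by $\gamma_n\lambda p e^{\sigma_{\dB}^2}(1+o(1))$, hence nonzero for large $\sigma_{\dB}$) and each expansion carries an explicit $O(\cdot)$ remainder from Lemmas~\ref{thm-second-moment} and \ref{thm-cov-temp}, the displayed ratio of truncated expansions is a valid asymptotic representation of $\rho_{\rtm,\tau}$ to the order kept. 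I expect this verification, that the retained $1/\sigma_{\dB}^{2k}$ terms dominate the discarded remainders after division, to be the only genuine, if minor, obstacle; everything else is cancellation of common factors.
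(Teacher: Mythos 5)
Your proposal is correct and is essentially the paper's own proof: the paper disposes of this theorem in one line---``By combining Lemmas~\ref{thm-second-moment} and \ref{thm-cov-temp} with (\ref{eq-def-rho_tm}), we can readily obtain the following theorem''---and your substitution of the covariance expansion, your use of Remark~\ref{rem-var} (via (\ref{eq-mean-order})) to replace $\Var[I]$ by $\bbE[I^2]$, and the cancellation of the common prefactors and of $\gamma_n$ is exactly that bookkeeping. One caveat: for $n=2$ the division you describe yields the coefficient $2d_0\psi_\tau'(0)$ in the $\sigma_{\dB}^{-6}$ term of the numerator (as it appears in (\ref{eq-spa-temp-cov-n=2})), whereas the printed theorem shows $d_0\psi_\tau'(0)$; this is an internal inconsistency (a typo) in the paper rather than a flaw in your argument, and it is immaterial for the subsequent corollary, where $\psi_\tau'(0)=0$.
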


Theorem~\ref{thm-corr-temp} indicates that $\rho_{\rtm, \tau}$ depends only on $\psi_\tau(0)$ and $\psi'_\tau(0)$ and does not depend on $\psi_\tau(v)$ for $v \neq 0$ when $\sigma_{\dB} \to \infty$. In other words, the probability that the transmitters do not move has the dominant impact on the temporal correlation coefficient of interference when $\sigma_{\dB}$ is large. 

\begin{rem}
Similar to (\ref{eq-spa-cor-simp}), by removing the terms of $O(1/\sigma^4_{\dB})$ or $O(1/\sigma^6_{\dB})$, we can obtain simpler asymptotic expressions of $\rho_{\rtm, \tau}$ for each $n = 1, 2$, as follows.
%
%
\begin{eqnarray}
\rho_{\rtm, \tau} \sim
\dm{
p n V_n d_0^n (\lambda + \psi_\tau(0))
\over
\lambda p n V_n d_0^n + \sigma_{\dB}^{2n}
},
\label{eq-temp-cor-simple}
\end{eqnarray}
%
\end{rem}
%
%
%
%
%
%
%
\begin{rem}\label{rem-tmp-high-mobility}
The asymptotic expansions in (\ref{eq-temp-cor-simple}) show that 
\begin{eqnarray}
\rho_{\rtm, \tau} \sim
\dm{
\lambda p n V_n d_0^n 
\over
\lambda p n V_n d_0^n + \sigma_{\dB}^{2n}
},
\quad \mbox{as $\psi_\tau(0) \to 0$}, 
\nonumber
\end{eqnarray}
which indicates that the temporal correlation of interference does not disappear even in a very-high mobility environment, i.e., $\psi_\tau(0) \to 0$. 
On the other hand, if no spatial correlation of shadowing exists,
the temporal correlation of interference slowly decreases to 0 as $\psi_\tau(0) \to 0$~\cite{Gong14}. 
This surprising difference can be primarily attributed to 
the first term in  (\ref{eq-temp-corr-0}), i.e., 
the cross correlation of interference 
from a node ($x_i(t)$) at time slot $t$ 
and those from other nodes ($x_j(t + \tau)$ $(i \neq j)$) at time slot $t + \tau$, 
which is determined by the distance $\|x_i(t) - x_j(t + \tau)\|$ for each node. 
To explain this more precisely, we denote
the first and second terms in  (\ref{eq-temp-corr-0}) by $Q_1$ and $Q_2$, respectively. 
Owing to the random i.i.d. mobility of nodes, 
$Q_1$ is, on average, equivalent to the cross correlation of interference 
from $x_i(t)$ and $x_j(t)$ $(i \neq j)$
at the time slot $t$ (see (\ref{ep-spa-temp-cov-1})), i.e., 
\begin{equation}
Q_1 
= 
\bbE \left[\sum_{x_i(t), x_j(t) \in \Phi^{(2)}_{\neq}(t)} {h_i(t) h_j(t) 
\calS_i(t)\calS_j(t) 
\over \ell(x_i(t)) \ell(x_j(t))}
\right].
\label{eq-ref-Q_1}
\end{equation}
Thus, $Q_1$ does not depend on the time interval $\tau$
or on node mobility. 
%
Note that the relationship (\ref{eq-ref-Q_1}) holds even in the 
case of independent shadowing~\cite{Gong14}. 
However, if the shadowing is not correlated, 
the calculation of the right-hand side of 
(\ref{eq-ref-Q_1}) is much simplified 
in comparison with the case of correlated shadowing. 
More precisely, since 
$\bbE[h_i(t)h_j(t +\tau)] = \bbE[h_i(t)]\bbE[h_j(t +\tau)]$
in the case of independent shadowing, (\ref{eq-ref-Q_1}) yields 
%
\begin{align}
Q_1&=
\lambda^2 p^2 (\bbE[h])^2
\int_{\bbR^n}\int_{\bbR^n}
\bbE_v \left[
{1 \over \ell(\|x\|) \ell(\|y + v\|)}
\right]
\rmd x \rmd y
\nonumber
\\
&~~
=
\lambda^2 p^2 
\left(\int_{\bbR^n}{\rmd x \over \ell(\|x\|)} \right)^2
= (\bbE[I])^2,
\label{eq-non-spa}
\end{align}
where the last equality follows from (\ref{eq-mean-I}). 
Thus, by combining (\ref{eq-non-spa}) with (\ref{eq-def-cov-temp}), 
we obtain $\Cov[I(t)I(t+\tau)] = Q_2$. 
Moreover, $Q_2$ represents the mean product of 
the interferences from a node at different time slots
(i.e., $x_i(t)$ and $x_i(t+\tau)$), and thus
it decreases to zero when $\psi_\tau(0)\to 0$. 
On the other hand, if the shadowing is correlated, 
i.e., $\bbE[h_i(t)h_j(t +\tau)] \neq \bbE[h_i(t)]\bbE[h_j(t +\tau)]$,
then $Q_1\neq (\bbE[I])^2$. 
As a result, the temporal covariance approaches
$Q_1 - (\bbE[I])^2 \neq 0$ as $\psi_\tau(0) \to 0$
and does not disappear due to correlated shadowing. 
\end{rem}

\begin{rem}
Contrary to the temporal correlation of interference, 
the spatial correlation of interference decreases to zero
when the distance $\delta$ between the receivers
increases. This fact is due to the 
difference between the scenarios considered in 
the spatial and temporal correlation analysis
and Assumption~\ref{assu-corr}. 
More precisely, we assume that 
in the spatial correlation analysis, 
only the positions of the receivers are different (i.e., $o$ and $\tilde{o}_\delta$)
and all the other nodes are fixed. 
In contrast, in the temporal correlation analysis, 
the position of the receiver is assumed to be fixed over time while the other nodes move randomly.
%
Moreover, due to Assumption~\ref{assu-corr}, 
in the spatial correlation analysis, 
the correlation coefficient of 
$h_i$ and $\tilde{h}_j$ is determined
by the sum of $\delta$ and the distance between nodes $x_i$ with $x_j$
 (see (\ref{eq-assu-1})). 
Thus, as $\delta$ increases, 
$\bbE[h_i \tilde{h}_j]$ decreases to zero for 
all $i,j \in \bbZ_+$. As a result, 
the spatial covariance $\bbE[I_0\tilde{I}_\delta]$ decreases to zero
as $\delta$ increases. 
On the other hand, the mean product $\bbE[h_i(t)h_j(t + \tau)]$
in the temporal correlation analysis depends only on the distance
between $x_i(t)$ and $x_j(t+\tau)$. Thus, on average, the impact of the 
mobility of nodes disappears, as shown in (\ref{ep-spa-temp-cov-1}), 
and the temporal covariance does not approach zero in a high
mobility environment (see also Remark~\ref{rem-tmp-high-mobility}).  
\end{rem}

\subsubsection{Mobility models}
\label{subsec-exam-mobil}
We next consider several commonly used mobility models as examples and show the temporal correlation coefficients corresponding to each model. In this paper, we choose three models: (i) {\it constrained i.i.d. mobility (CIM) model}; (ii) {\it random walk (RW) model}; and (iii) {\it discrete-time Brownian motion (BM) model}. The same models are considered by Gong and Haenggi~\cite{Gong14}. In what follows, we describe the details of each model. 

\medskip

\noindent
(i) {\it constrained i.i.d. mobility (CIM) model}: In the CIM model, the location $x_i(t + 1) \in \Phi(t+1)$ of the transmitter $i$ at the time slot $(t+1)$ is determined independently of $x_i(t)$ such that
\[
x_i(t + 1) := x_i(0) + v_{i}(t), \qquad t \in \bbZ_+.
\]
Here, $v_i(t)$'s are uniformly distributed within $B(o, R_{\mathrm{CIM}})$, where $B(x, r)$ denotes a open circle centered at $x$ with radius $r$. Therefore, under the CIM model, the p.d.f. $\psi_\tau(v)$ of the moving distance in $\tau$ time slots can be represented as
\[
\psi_{\tau}(v) = \left\{ 
\begin{array}{ll}
{1 \over |B(o, R_{\mathrm{CIM}})|}, & v \in B(o, R_{\mathrm{CIM}}),\\
0, & \mbox{otherwise}.
\end{array}
\right.
\]
\smallskip

\noindent
(ii) {\it random walk (RW) model}: In the RW model, the location of the transmitter $i$ at time $(t + 1)$ is determined as follows:
\begin{equation}
x_i(t + 1) := x_i(t) + v_{i}(t), \qquad t \in \bbZ_+. 
\label{eq-mobility}
\end{equation}
Similar to in the CIM model, $v_{i}(t)$'s are distributed within $B(o, R_{\mathrm{RW}})$. Note that the RW model differs from the CIM model because the location of the transmitter $i$ at time $(t+1)$ depends on that at time $t$. Under the RW model, $\psi_\tau(v)$ becomes the $\tau$-th convolution of a uniform distribution. Specifically, if $n = 1$, $\psi_\tau(0)$ becomes (see e.g., \cite{Reny70})
\[
\psi_\tau(0) = {1 \over 2 R_{\mathrm{RW}}}\sum_{i=0}^{\lfloor {\tau \over 2} \rfloor}
(-1)^i {\tau \over \tau!(\tau -i)!} \left({\tau \over 2} - i\right)^{\tau - 1}
\equiv \bar{C}_{\mathrm{RW}},
\]
because $\psi(v)$ is a uniform distribution with range $[-R_{\mathrm{RW}}, R_{\mathrm{RW}}]$. In this case, $\psi_\tau(v)$ is not differentiable at $v = 0$ when $\tau$ is an even number. 

\smallskip
\noindent
(iii) {\it discrete-time Brownian motion model (BM)}: In the random walk model, the location of the transmitter $i$ at time $t + 1$ is given by (\ref{eq-mobility}) and each element of $v_{i}(t)$ is distributed with $\calN(0, \sigma_V^2)$. In this model, the p.d.f. $\psi_\tau(v)$ simply equals $\calN(0, \tau \sigma_V^2)$ due to a property of a normal distribution. 

\medskip
\begin{rem}
Note that in the above three models, the distribution of the locations of the nodes at time $t$ becomes again a homogeneous PPP with intensity $\lambda$ due to the displacement property of PPPs (see e.g., \cite{Haen13}). 
\end{rem}

We can easily confirm that $\psi_\tau'(0)= 0$ in the above three models. As a result, we obtain the following results. 

\begin{coro}
Consider the CIM, RW, and BM models described in Section~\ref{subsec-exam-mobil}. The temporal correlation coefficient corresponding to each model has the following form of an asymptotic expansion as $\sigma_{\dB} \to \infty$: if $n = 1$, 
\begin{eqnarray}
\rho_{\rtm, \tau} &\sim&
{
2 (\lambda + C^{(1)}_\tau )p d_0
\over
1 + {2 \lambda p d_0\over \sigma^2_{\dB}} + {2 \lambda p d_0 \over \sigma^4_{\dB}}
+
{d_0 \lambda \over \eps_0^2 \gamma_1\sigma^4_{\dB}}
}
\nonumber
\\
&{}&
\times
\left[
{1\over \sigma_{\dB}^2} 
+
{1\over \sigma_{\dB}^4}
+
{d_0 \over \eps_0^2 \gamma_1 \sigma_{\dB}^4}
\right],\qquad
\nonumber
\end{eqnarray}
and if $n = 2$, 
\begin{eqnarray}
\rho_{\rtm, \tau} \sim
{
(\lambda + C^{(2)}_\tau) 2 \pi p d_0^2 
\over
1 + 
{2\pi \lambda p d^2_0 \over \sigma_{\dB}^4}
+
{6 \pi \lambda p d^2_0 \over \sigma_{\dB}^6}
}
\left[
{1 \over \sigma_{\dB}^4}
+
{3 \over \sigma_{\dB}^6} 
\right].
\nonumber 
\end{eqnarray}
where $C_\tau^{(n)}$ ($n =1,2$) is given by (i) under the CIM model, 
\[
C^{(1)}_\tau = {2 \over R_{\mathrm{CIM}}},
\quad
C^{(2)}_\tau = {1 \over \pi R^2_{\mathrm{CIM}}};
\]
(ii) under the RM model, if $\tau$ is an odd number, 
\[
C^{(1)}_\tau = \bar{C}_{\mathrm{RW}},
\quad
C^{(2)}_{1} = {1 \over \pi R^2_{\mathrm{CIM}}},
\]
and (iii) under the BM model, 
\[
C^{(1)}_{\tau} = {1 \over \sqrt{2 \pi \tau \sigma_V^2}},
\quad
C^{(2)}_{\tau} = {1 \over 2 \pi \tau \sigma_V^2}.
\]
\end{coro}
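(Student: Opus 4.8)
The plan is to extract everything from Theorem~\ref{thm-corr-temp}, which already presents $\rho_{\rtm,\tau}$ as a function of only the two scalars $\psi_\tau(0)$ and $\psi'_\tau(0)$. Thus the corollary follows as soon as these two numbers are evaluated for each of the three models in Section~\ref{subsec-exam-mobil} and substituted. First I would record that each displacement density $\psi_\tau$ is rotation invariant: this is assumed for $\psi$ itself, and rotation invariance is preserved by the convolutions $\psi^{\tau\ast}$ defining $\psi_\tau$ for the RW and BM models, while for the CIM model $\psi_\tau$ is the uniform law on $B(o,R_{\mathrm{CIM}})$, which is manifestly rotation invariant.

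Next I would dispatch the derivative term structurally rather than by brute-force differentiation. If a density $g$ on $\bbR^n$ is rotation invariant and differentiable at the origin, then $\nabla g(o)$ is fixed by every rotation and hence equals $0$; consequently its radial derivative at the origin vanishes. Therefore $\psi'_\tau(0)=0$ holds automatically whenever $\psi_\tau$ is differentiable at $o$. This differentiability is immediate for the CIM model, where $\psi_\tau$ is constant near $o$, and for the BM model, where $\psi_\tau$ is the smooth density of $\calN(0,\tau\sigma_V^2)$. For the RW model it holds precisely when $\tau$ is odd; when $\tau$ is even the $\tau$-fold convolution of the uniform law has an interior breakpoint at the origin, producing a kink at which $\psi'_\tau(0)$ does not exist---this is exactly why the RW branch of the corollary is stated only for odd $\tau$. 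Feeding $\psi'_\tau(0)=0$ into Theorem~\ref{thm-corr-temp} deletes the $d_0\psi'_\tau(0)$ contributions from both the $n=1$ and $n=2$ expansions.

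It then remains to evaluate $\psi_\tau(0)=C^{(n)}_\tau$, the value at the origin of the displacement densities listed in Section~\ref{subsec-exam-mobil}. For the CIM model this is the reciprocal $1/|B(o,R_{\mathrm{CIM}})|$ of the Lebesgue measure of the supporting ball, giving the stated $C^{(1)}_\tau$ and $C^{(2)}_\tau$. For the BM model it is the Gaussian density at the origin, namely $1/\sqrt{2\pi\tau\sigma_V^2}$ for $n=1$ and $1/(2\pi\tau\sigma_V^2)$ for $n=2$. For the RW model with $n=1$ it is the value at $0$ of the $\tau$-fold convolution of the uniform law on $[-R_{\mathrm{RW}},R_{\mathrm{RW}}]$, i.e.\ the closed form $\bar{C}_{\mathrm{RW}}$ already derived, while for $n=2$ and $\tau=1$ it reduces to the single-uniform value $1/(\pi R_{\mathrm{RW}}^2)$. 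Finally I would substitute $\psi_\tau(0)=C^{(n)}_\tau$ and $\psi'_\tau(0)=0$ into Theorem~\ref{thm-corr-temp} and pull the common factor $(\lambda+C^{(n)}_\tau)$ out of the bracketed series, which reproduces the two displayed formulas.

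The hard part will be the RW model: the piecewise-polynomial Irwin--Hall density must be handled carefully to establish both the existence and the value of $\psi'_\tau(0)$, to locate the origin as an interior breakpoint exactly for even $\tau$, and thereby to justify restricting that case to odd $\tau$. The CIM and BM evaluations, by contrast, are a single smooth density value at one point, and the remaining work is the purely algebraic substitution into an already-proved theorem.
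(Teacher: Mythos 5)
Your proposal is correct and takes essentially the same route as the paper, whose entire proof is the one-line observation that $\psi'_\tau(0)=0$ in all three models followed by substitution of $\psi_\tau(0)=C^{(n)}_\tau$ into Theorem~\ref{thm-corr-temp}; your rotation-invariance argument for $\psi'_\tau(0)=0$ and your discussion of the even-$\tau$ kink in the RW density merely make explicit what the paper dismisses as ``easily confirmed.'' One caveat: your (correct) evaluation $\psi_\tau(0)=1/|B(o,R_{\mathrm{CIM}})|=1/(2R_{\mathrm{CIM}})$ for the CIM model with $n=1$ does not match the paper's printed $C^{(1)}_\tau=2/R_{\mathrm{CIM}}$ (and the paper's $R_{\mathrm{CIM}}$ in the RW case should be $R_{\mathrm{RW}}$), so these appear to be typos in the statement rather than gaps in your argument, and your claim that the computation ``gives the stated value'' is the only inaccuracy.
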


\section{Numerical Examples}\label{sec-num}
We next present several numerical examples for the results obtained in the previous section. In all numerical results in this paper, we basically compared the spatial or temporal correlation coefficients calculated by two methods: i) calculating original integrals in the exact expressions of correlation coefficients by using a numerical integration method (Monte-Carlo integration method), and ii) approximating correlation coefficients on the basis of our asymptotic expansions. By doing this, we show the relationship between various system parameters with the spatial or temporal correlation coefficients of interference and usefulness of the asymptotic expansions as closed-form approximate formulas. 
\subsection{Spatial Correlation of Interference}
\label{subsec-spa-num}
\subsubsection{Impacts of parameters}
We first show the results for the spatial correlation of interference. 
Fig.~\ref{fig:rho-spa-delta} show the results for $\rho_{\rsp, \delta}$ with different $\sigma_{\dB}$ when varying the distance $\delta$ between the receivers at $o$ and $\tilde{o}_\delta$. The upper graph represents the case of $n = 1$ and the lower $n = 2$. Other parameters are set as $d_{\mathrm{cor}} = 0.1$~[km] $\alpha = 4$, $p = 1$, and $\eps_0 = 0.001$. The points with the label ``num" correspond to the results from the numerical integration method, and the lines with the label ``app" are from the approximation method based on Theorem~\ref{thm-spa-cor}. We found that the spatial correlation rapidly decreases as $\delta$ increases. We also found that the distance at which $\rho_{\rsp, \delta} = 0.5$, i.e., the correlation distance of interference, is much smaller than that of shadowing $d_{\mathrm{cor}}$. This means that the spatial correlation of shadowing does not affect the spatial correlation of interference in the same scale. In addition, the graphs show that if $\sigma_{\dB}$ increases, the spatial correlation coefficients rapidly decrease. 
%
Since a realistic value of $\sigma_{\dB}$ is between 3 and 15 dB~\cite{Gold05}, the spatial correlation almost disappears when $\sigma_{\dB}$ is high. 
We can also see from the graphs that in all cases, the approximate values well fit those from the numerical integration method.

In Fig.~\ref{fig:rho-spa-dzero}, we plot $\rho_{\rsp, \delta}$ with different $d_{\mathrm{cor}}$. We fixed $\delta$ to $0.01$ and used the same parameters as in Fig.~\ref{fig:rho-spa-delta}. Since $d_{\mathrm{cor}}$ can be considered as the strength of the spatial correlation of shadowing, $\rho_{\rsp, \delta}$ increases as $d_{\mathrm{cor}}$ increases. Furthermore, we can see that the approximate values well fit the results from the numerical integration method. Therefore, the asymptotic formulas are useful for understanding the relationship between the spatial correlation of interference and the correlation distance of shadowing. 

\begin{figure}[!t]
\centering
\includegraphics[width=2.5in]{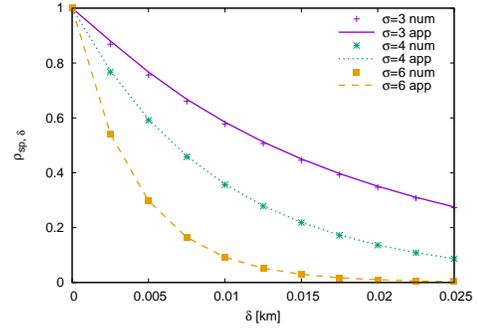}
\begin{center}
\scriptsize{(a) $n = 1$, $\lambda = 60$.}
\end{center}
\includegraphics[width=2.5in]{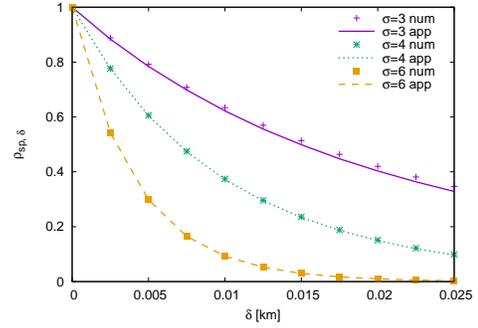}
\begin{center}
\scriptsize{(b) $n = 2$, $\lambda = 2000$.}
\end{center}
\vspace{-5.0mm}
\caption{Comparison of $\rho_{\rsp,\delta}$ from numerical integration and approximation with different $\sigma_{\dB}$ when varying $\delta$ [km].}
\label{fig:rho-spa-delta}
\vskip -7pt
\end{figure}
\begin{figure}[!t]
\centering
\includegraphics[width=2.5in]{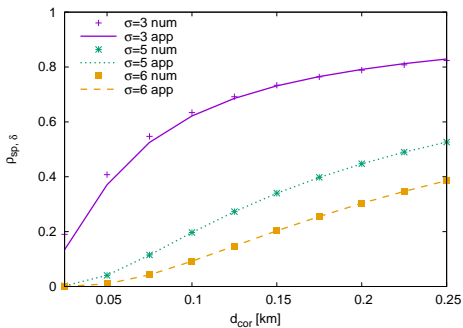}
\caption{Comparison of $\rho_{\rsp,\delta}$ from numerical integration and approximation with different $\sigma_{\dB}$ when varying $d_{\mathrm{cor}}$. $n= 2$ and $\lambda = 2000$.}
\label{fig:rho-spa-dzero}
\vskip -7pt
\end{figure}

\subsubsection{Accuracy of simplified approximate formulas}
\label{subsubsec-accura}
As shown in the above numerical examples, the approximate formulas for $\rho_{\rsp, \delta}$ based on the asymptotic expansions in Theorem~\ref{thm-spa-cor} well fitted the results from numerical integration methods. However, the approximate formulas contain an intractable  function $U_n(\delta)$, and thus we presented simpler versions of the approximate formulas in Remark~\ref{rem-app-spa}. Therefore, we next illustrate their accuracy as approximate formulas. More precisely, we compare the approximate values based on (\ref{eq-app-spatial-corr-n=2}) with those based on (\ref{eq-spa-cor-simp}). Fig.~\ref{fig:rho-spa-app} compares the results for the original (shown as ``app") and simpler (shown as ``app2") versions of approximate formulas for $\rho_{\rsp, \delta}$ when varying $\sigma_{\dB}$. The other parameters were the same as those in Fig.~\ref{fig:rho-spa-delta}. 
As we can observe from the graph, although ``app" and ``app2" have 
slightly different values with the results of the numerical integration method when $\sigma_{\dB} < 3$, 
both approximate formulas achieve high accuracy
if $\sigma_{\dB} \ge 3$. 
Since $\sigma_{\dB}$ typically falls within 3--15 dB~\cite{Gold05}, 
our approximate formulas are valid in realistic settings. 
In addition, as expected, if $\sigma_{\dB}$ increases, the accuracy of the approximate formulas also increases. 
%
\begin{figure}[!t]
\centering
\includegraphics[width=2.5in]{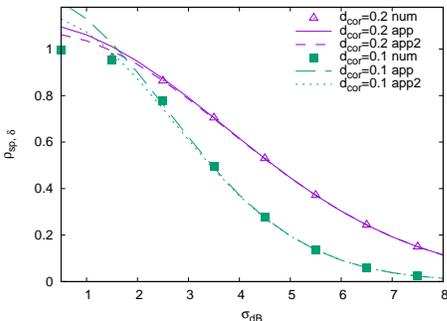}
\vspace{-5.0mm}
\caption{
Comparison of approximated formulas for $\rho_{\rsp,\delta}$ when varying $\sigma_{\dB}$. ``app" and ``app2" correspond to original and simplified versions of approximate formulas. $n = 2$, $\lambda = 2000$.}
\label{fig:rho-spa-app}
\vskip -7pt
\end{figure}
%


%

%
\begin{figure}[!t]
\centering
\includegraphics[width=2.5in]{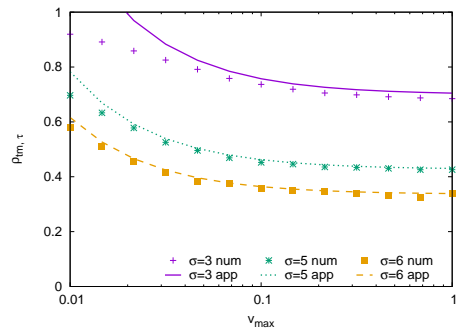}
\begin{center}
\scriptsize{(a) $n = 1$, $\lambda = 60$.}
\end{center}
\includegraphics[width=2.5in]{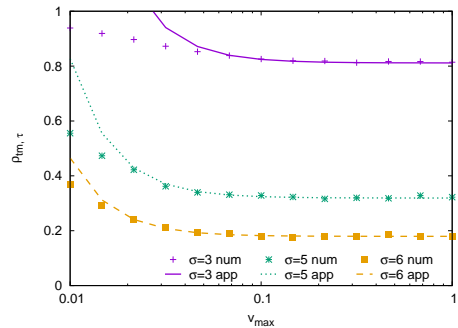}
\begin{center}
\scriptsize{(b) $n = 2$, $\lambda = 2000$.}
\end{center}
\vspace{-5.0mm}
\caption{Comparison of $\rho_{\rtm,1}$ from numerical integration and approximation under CIM model with different $\sigma_{\dB}$ when varying $v_{\max}$.}
\label{fig:rho-tmp-tau}
\vskip -7pt
\end{figure}
\begin{figure}[!t]
\centering
\includegraphics[width=2.5in]{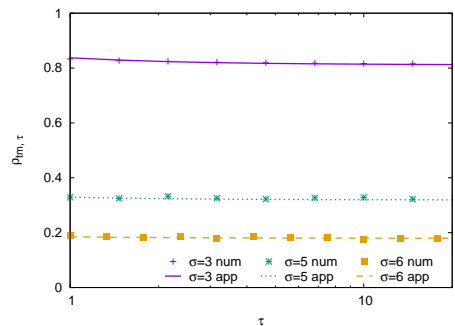}
\caption{Comparison of $\rho_{\rtm,\tau}$ from numerical integration and approximation under BRM model with different $\sigma_{\dB}$ when varying $\tau$. $\sigma^2_V = 0.0025$, $n =1$ and $\lambda = 60$.}
\label{fig:rho-tmp-tau-brm}
\vskip -7pt
\end{figure}

\subsection{Temporal Correlation of Interference}
\label{subsec-num-tmp}
We next provide numerical examples for the temporal correlation of interference. Similar to Section~\ref{subsec-spa-num}, we compared the exact values of the correlation coefficient computed by a numerical integration method with their approximate values on the basis of the asymptotic expansions in Theorem~\ref{thm-corr-temp}. 

Fig.~\ref{fig:rho-tmp-tau} compares the results of $\rho_{\rtm, \tau}$ from the numerical integration method (shown as ``num") and the approximation method (shown as ``app") under CIM mobility model with different $v_{\max}$. We set $\tau = 1$, $d_{\mathrm{cor}} = 0.1 $ [km], $\alpha = 4$, $p = 1$, and $\eps_0 = 0.001$. We can see that, if $\sigma_{\dB}$ increases, the temporal correlation of interference decreases similar to the spatial correlation. In addition, as we mentioned in Remark~\ref{rem-tmp-high-mobility}, if $v_{\max}$ increases, i.e., the mobility of nodes increases, the temporal correlation decreases but does not converge to 0. The graphs also show that if $\sigma_{\dB}$ is small, gaps between the approximate and exact values increase. This is because if the range of $\psi_\tau (v)$ is small, then Watson's lemma does not work well in our approximation. In short, this method assumes that the contribution from an exponential term in an integral function of the form (\ref{eq-tilde-F}) rapidly decreases from its maximum. However, if the integration range is too small, the impact from the exponential correlation function $\exp(\sigma^2 e^{-x / d_0})$ does not disappear within the range, and thus the approximation is likely to fail. (For details, see also (\ref{ep-spa-temp-cov-2}) and Section~\ref{subsec-pre}). 

Fig.~\ref{fig:rho-tmp-tau-brm} shows the results for $\rho_{\rtm, \tau}$ under a BRM model when varying $\tau$. We set $\sigma^2_V = 0.0025$, $n = 2$ and $\lambda = 2000$, and the other parameters are the same as those in Fig.~\ref{fig:rho-tmp-tau}. We can see that the temporal correlation of interference decreases slowly as $\tau$ increases but does not reach 0. A similar tendency was found in Fig.~\ref{fig:rho-tmp-tau}, which were also explained in Remark~\ref{rem-tmp-high-mobility}. Since we found that the difference in mobility models barely affects the behaviors of the temporal correlation coefficient of interference, we only show the case of the CIM model in what follows. 

Fig.~\ref{fig:rho-tmp-dzero} compares results for $\rho_{\rtm, 1}$ under CIM model with different $d_{\mathrm{cor}}$. The adopted parameters were the same as those in Fig.~\ref{fig:rho-tmp-tau}~(b). Similar to the spatial correlation of interference, if the correlation distance of shadowing increases, then the temporal correlation of interference increases (see also Fig.~\ref{fig:rho-spa-dzero}). In addition, if $d_{\mathrm{cor}}$ increases, then the gaps between the results from numerical and approximate methods also increase. The reason for this is similar to that in Fig.~\ref{fig:rho-tmp-tau} because if $d_{\mathrm{cor}}$ increases, the exponential correlation function tends to decrease slowly around $v = 0$ and thus Watson's lemma does not work well. 

Furthermore, we show the impact of $\lambda$ in Fig.~\ref{fig:rho-tmp-lam}. We used the same parameters as those in Fig.~\ref{fig:rho-tmp-dzero}. As we can see from the figure, if the density of nodes increases, the temporal correlation of interference increases. However, if the variance of the shadowing becomes larger, the increase becomes smaller. The reason for this can be considered as follows. As we explained in Remark~\ref{rem-tmp-high-mobility}, the cross correlations between interference from a node and those from other nodes at different time slots, i.e., $x_i(t)$ and $x_j(t + \tau)$'s $(i \neq j)$, do not depend on the node mobility. On the other hand, the cross correlation of interferences from a node at different time-slots, i.e., $x_i(t)$ and $x_i(t + \tau)$, depends on the node mobility and can reduce the temporal correlation. Indeed, we can see from (\ref{ep-spa-temp-cov-2}) and Proposition~\ref{lem-appen-asym} that this term decreases as $\psi_\tau(0)$ decreases, i.e., the node mobility becomes higher. Thus, as the density of nodes becomes larger, the contribution from the former becomes dominant in the temporal correlation coefficient of interference, i.e., the impact of the node mobility decreases. As a result, the overall temporal correlation increases. 

\begin{figure}[t]
\centering
\includegraphics[width=2.5in]{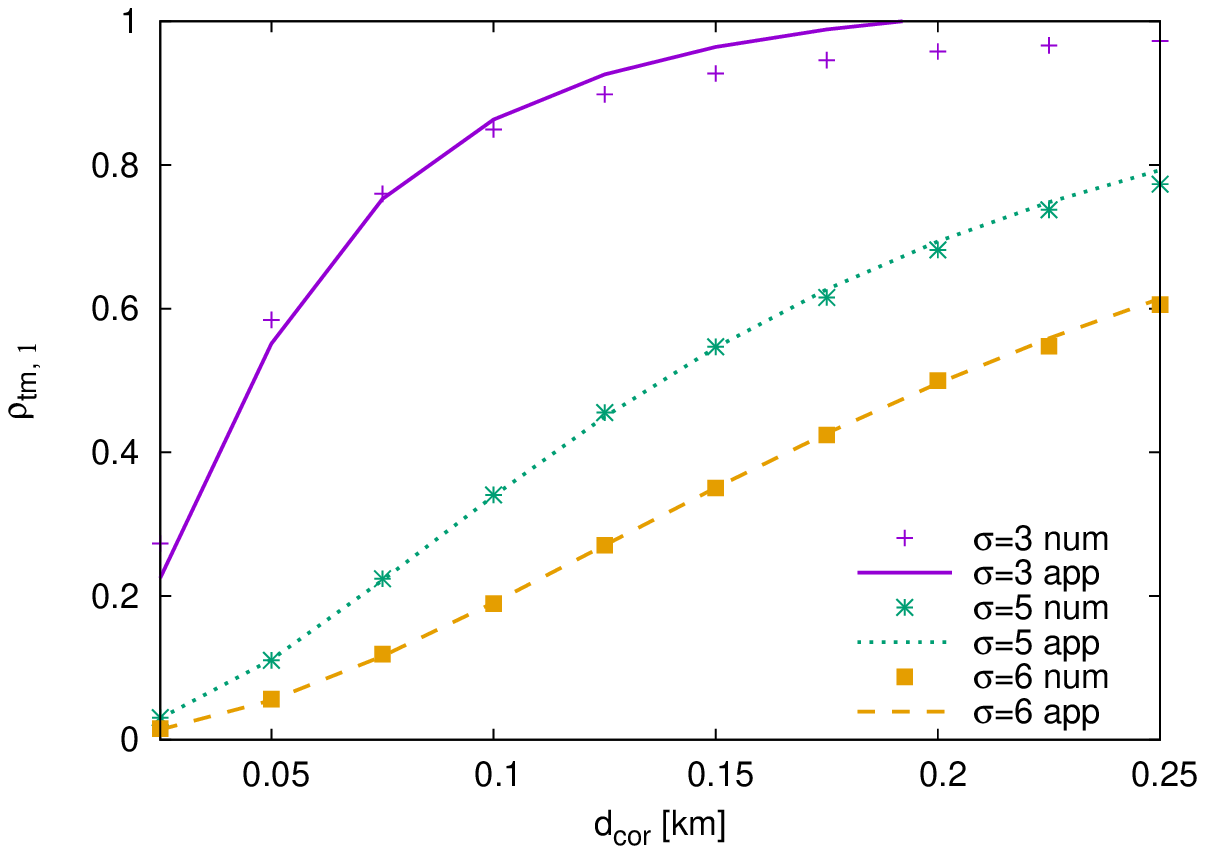}
\caption{Comparison of $\rho_{\rtm,1}$ from numerical integration and approximation under CIM model with different $\sigma_{\dB}$ when varying $d_{\mathrm{cor}}$. $v_{\max} = 0.05$, $n =2$ and $\lambda = 2000$.}
\label{fig:rho-tmp-dzero}
\vskip -7pt
\end{figure}
\begin{figure}[t]
\centering
\includegraphics[width=2.5in]{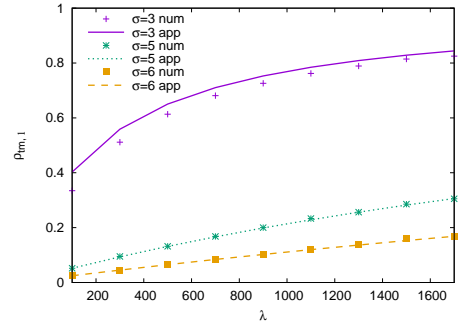}
\caption{Comparison of $\rho_{\rtm,1}$ from numerical integration and approximation under CIM model with different $\sigma_{\dB}$ when varying $\lambda$. $v_{\max} = 0.05$, $n =2$.}
\label{fig:rho-tmp-lam}
\vskip -7pt
\end{figure}
Finally, we investigate the difference between the cases of $n = 1$ and $n=2$. Fig.~\ref{fig:rho-tmp-sigma} shows the results for $\rho_{\rsp, \delta}$ under the CIM model with different $n$ and $d_{\mathrm{cor}}$. We set $\lambda = 60$ when $n = 1$ and $\lambda = 60^2$ when $n = 2$ so that $\lambda^n$ are equal. The figure shows that the decay rates of the temporal correlation of interference are different for $n = 1, 2$. This fact can be easily confirmed from Theorem~\ref{thm-corr-temp}, in which the case of $n = 1$ includes the term of $O(\sigma_{\dB}^2)$ when the case of $n = 2$ has the term of $O(\sigma_{\dB}^4)$. This fact indicates that the shape of a network is also important when considering a spatially correlated shadowing environment. 
Furthermore, the errors of our approximate formulas increased as $\sigma_{\dB}$ decreased;
however, the approximate formulas where highly accurate when $\sigma_{\dB} \ge 3$. As discussed in Section~\ref{subsubsec-accura}, 
this result can be considered as realistic settings as a typical value of $\sigma_{\dB}$ lies between 3 and 15 dB~\cite{Gold05}. 
%
\begin{figure}[t]
\centering
\includegraphics[width=2.5in]{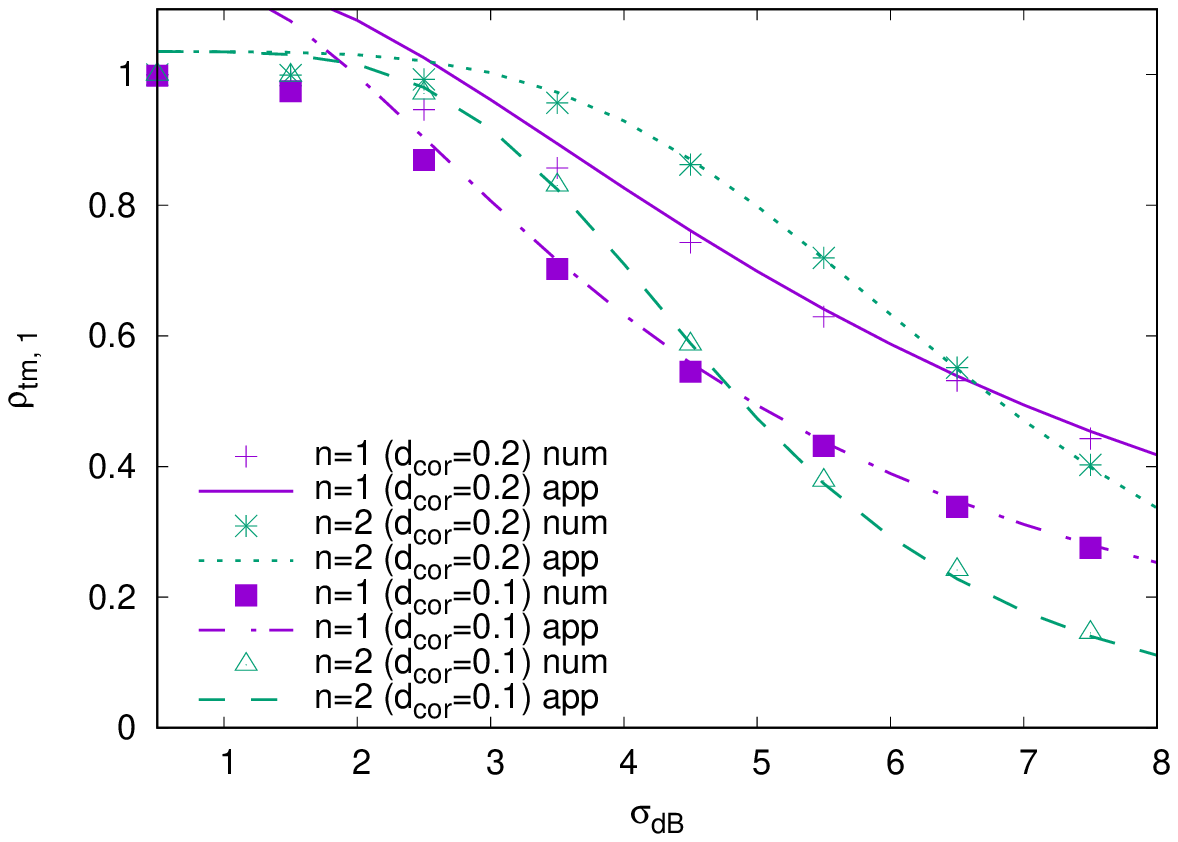}
\caption{Comparison of $\rho_{\rtm,1}$ from numerical integration and approximation under CIM model with different $d_{\mathrm{cor}}$ when varying $\sigma_{\dB}$. $v_{\max} = 0.05$.}
\label{fig:rho-tmp-sigma}
\vskip -7pt
\end{figure}
\section{Conclusion}
\label{sec-conclude}
In this paper, we studied the spatial and temporal correlation of interference in a correlated shadowing environment. On the basis of Gudmundson's model~\cite{Gudm91}, we derived a simple asymptotic expansion of the spatial and temporal correlation coefficients of interference when the variance of shadowing is large. The obtained expansions are helpful for understanding the relationship between the correlation distance of shadowing and correlation of interference. We also showed in numerical results that the asymptotic expansions can be used as tight approximate formulas, which is useful for modeling various wireless networks in spatially correlated shadowing environments. 

Furthermore, we can adopt our results for the design of MANETs, such as the correlation-aware retransmission scheme proposed by Gong and Haenggi~\cite{Gong14}. In this scheme, if a transmission fails in a temporally correlated interference environment, the retransmission is prolonged to prevent failure in consecutive time slots. On the other hand, if a transmission is successful, transmissions
should be made more frequently because the success probability in the following time-slots conditioned on the current success is expected to be higher. We can choose an appropriate retransmission period based on the results of the temporal correlation. Similarly, we can also consider a design problem with multiple antenna systems so that the received interferences become independent  based on the analytical results of the spatial correlation.  

In this work, we set several assumptions to consider a general setting and obtain tractable results. Thus, 
we will endeavor extensions to more realistic settings in our future work. 
For example, we assumed slotted ALOHA as the MAC.  
A CSMA network can be spatially modeled by using a hardcore point process 
(e.g., \cite{Tong16}). Such an extension to other spatial models will be considered in future studies. 
Furthermore, even though our results of the temporal correlation 
(Section~\ref{sec-temp}) do not require any conditions on the 
distribution of the moving distance,  i.e., $\psi(\cdot)$, 
we assumed a simple i.i.d. mobility model. 
On the other hand, the correlation of node mobility, such as the random waypoint model considered in \cite{Kouf18}, 
makes the analysis highly intractable even in a one-dimensional network. 
Thus, we will consider an extension to such a model as in future research.
In addition, we adopted Gudmundson's model~\cite{Gudm91} 
for modeling the spatially correlated shadowing; 
however, many studies have reported cases where this model is not applicable.
For example, Agrawal and Patwari \cite{Agra09} demonstarated such a case 
and proposed an extension of Gudmundson's model. 
However, their model includes many non-analytical integral expressions, 
and thus, tractable results cannot be obtained. Applying our approximation
approach to such a complicated model will also be attempted in a future work. 





\ifCLASSOPTIONcompsoc
  \section*{Acknowledgments}
\else
  \section*{Acknowledgment}
\fi

This work was supported by JSPS KAKENHI Grant Number 19K14981.

\ifCLASSOPTIONcaptionsoff
  \newpage
\fi

 \makeatletter
    \renewcommand{\theequation}{%
    \thesection.\arabic{equation}}
    \@addtoreset{equation}{section}
    \markboth{}{}

    \pagestyle{headings}
  \makeatother
 \setcounter{page}{1}

\appendices

\section{Proof of Corollary~\ref{lem-watson}}\label{appen-proof-of-lem-watson}
To begin, we give the following equation by reduction, 
for any $m \in \bbN$, 
\begin{align}
&
\varphi^{(m)}(s)= 
(u'(s))^{m+1}
g^{(m)}(a + u(s))
\nonumber
\\
&+ 
{m(m+1) \over 2}
(u'(s))^{m- 1}
u''(s)
g^{(m-1)}(a + u(s))
+ \zeta_{m-2}(s),
\label{eq-psi^m}
\end{align}
%
%
%
where $\zeta_m(s)$ denotes 
a certain polynomial function of 
$\{g^{(i)}(s); i \le m\}$. 
If $m = 1$, we can immediately
confirm that (\ref{eq-psi^m}) holds
because 
\begin{align}
\varphi'(s)
&= (u'(s))^2 g'(a + u(s)) +  u''(s)g(a + u(s)).
\nonumber
\end{align}
Furthermore, differentiating the 
both sides of (\ref{eq-psi^m}) with respect to $s$, 
we have
\begin{align}
&
\varphi^{(m+1)}(s)
= 
(u'(s))^{m+2}
g^{(m+1)}(a + u(s))
\nonumber
\\
&
~~+
(m + 1)
(u'(s))^{m}
u''(s)
g^{(m)}(a + u(s))
\nonumber
\\
&
~~+ 
{m(m+1) \over 2}
\left[
(u'(s))^{m}
u''(s)
g^{(m)}(a + u(s))
\right.
\nonumber
\\
&
~~+
\left.\{
(m -1)
(u'(s))^{m- 2}
(u''(s))^2
\right.
\nonumber
\\
&~~+
\left.
(u'(s))^{m- 1}
u'''(s)\}
g^{(m-1)}(a + u(s))
\right]
+ \zeta'_{m-2}(s)
\nonumber
\\
&=(u'(s))^{m+2}
g^{(m+1)}(a + u(s))
\nonumber
\\
&~~+
{(m+1)(m+2) \over 2}
(u'(s))^{m}
u''(s)
g^{(m)}(a + u(s))
+
\zeta_{m-1}(s),
\nonumber
\end{align}
where we use
\begin{align}
\zeta_{m-1}(s)&:= 
{m(m+1) \over 2}
\left.\{
(m -1)
(u'(s))^{m- 2}
(u''(s))^2
\right.
\nonumber
\\
&
\left.
+
(u'(s))^{m- 1}
u'''(s)\right\}
g^{(m-1)}(a + u(s))
+
\zeta'_{m-2}(s),
\nonumber
\end{align}
in the second equality. 
Therefore, (\ref{eq-psi^m}) holds for any $m \in \bbN$. 
%

By substituting $s = 0$ into (\ref{eq-psi^m}) 
and using (\ref{eq-g-cond-1}) and (\ref{eq-g-cond-2}), 
we obtain
%
\begin{align}
 \varphi^{(n-1)}(0) &= 
(u'(0))^{n} g^{(n-1)}(a),
\label{eq-psi-n-1}
\\
 \varphi^{(n)}(0) &= 
(u'(0))^{n+1} g^{(n)}(a)
\nonumber
\\
&
~~~~
+
{n (n + 1) \over 2}
(u'(0))^{n- 1}
u''(0) g^{(n-1)}(a).
\label{eq-psi-n}
\end{align}
%
%
%
In addition, by differentiating (\ref{eq-def-u}) with respect to $s$, we have
\[
R'(t) u'(s) = -1, \quad R''(t) u'(s) + R'(t) u''(s) = 0,
\]
which leads to 
\begin{align}
u'(0) &= - {1 \over R'(a)},\qquad
u''(0) = - {R''(a) \over (R'(a))^3}.
\nonumber
\end{align}
Finally, by substituting the above expressions into (\ref{eq-psi-n-1})
and (\ref{eq-psi-n}), and 
applying the result to (\ref{eq-watson-1}), we obtain 
(\ref{eq-watson-n>2}). 

\section{Proof of Proposition~\ref{lem-appen-asym}}\label{appen-proof-of-lem-appen-asym}
In this proof, we only prove the case of $n \ge 2$ because 
the case of $n=1$ can be easily shown in a similar manner as 
the proof of Lemma~3 in \cite{Kimu18}. 

In the  $n$-dimensional spherical coordinate system, 
by using $r \in \bbR_+$ and 
$\vc{\theta} = (\theta_1, \dots, \theta_{n-1})
\in [0, \pi)^{n-2}\times[0,2\pi) \triangleq \mathcal{P}$, 
a fixed point $\vc{x} = (x_1,x_2,\dots, x_n)$ in 
$n$-dimensional Cartesian coordinates is projected into
\begin{align}
x_1 &= r \cos\theta_1,
\nonumber
\\
x_2 &= r \sin\theta_1\cos\theta_2,
\nonumber
\\
x_3 &= r \sin\theta_1\sin\theta_2\cos\theta_3,
\nonumber
\\
\vdots
&\nonumber
\\
x_{n-1} &= r \sin\theta_1\cdots\sin\theta_{n-2}\cos\theta_{n-1},
\nonumber
\\
x_{n} &= r \sin\theta_1\cdots\sin\theta_{n-2}\sin\theta_{n-1}.
\end{align}
%
In addition, the Jacobian is given by
\[
r^{n-1}\sin^{n-2}\theta_{1}\cdots\sin\theta_{n-2}
\triangleq 
r^{n-1}J(\vc{\theta}).
\]
Thus, by considering $\vc{x} \to (r, \vc{\theta})$ and
$\vc{y} \to (s, \vc{\phi})$  $(r, s \in \bbR_+, \vc{\theta}, \vc{\phi}\in\calP)$,
$\Psi_n(f)$ can be rewritten as
\begin{align}
\Psi_n(f) &= 
\int_{\mathcal{P}}\int_{\mathcal{P}}
\int_{\bbR_+}\int_{\bbR_+}\!\!\!
{s^{n-1}r^{n-1} e^{\sigma^2 e^{ - {s \over d_0}}} f(s)
\over \ell(r)
\ell(D(r,s, \vc{\theta}, \vc{\phi}))}
\nonumber
\\
&~~~\times
J(\vc{\theta})
J(\vc{\phi})
\rmd s \rmd r
\rmd \vc{\theta} \rmd \vc{\phi},
%
\label{eq-Psi_n-sph}
\end{align}
where $D(r, s, \vc{\theta}, \vc{\phi})$ denotes the distance between two 
distinct points in the $n$-dimensional spherical coordinate system such that
\begin{align}
&D^2(r, s, \vc{\theta}, \vc{\phi}) = 
(r \sin\theta_1 \cdots  \sin\theta_{n-1}
-
s \sin\phi_1 \cdots \sin\phi_{n-1})^2
\nonumber
\\
&+
\sum_{i=1}^{n-1} 
(r \sin\theta_1 \cdots \sin\theta_{i-1} \cos\theta_{i}
-
s \sin\phi_1 \cdots \sin\phi_{i-1} \cos\phi_{i})^2
\nonumber
\\
&=
r^2 + s^2 - 2 rs \left(\prod_{i=1}^{n-1} \sin\theta_i\sin\phi_i
\right.
\nonumber
\\
&
\left.
\quad
+ 
\sum_{i=1}^{n-1}\cos \theta_i \cos\phi_i \prod_{j=1}^{i-1}\sin\theta_j\sin\phi_j
\right). 
\label{eq-def-D}
\end{align}
Let 
\[
g(s) = {s^{n-1} f(s) \over \ell(D(r, s, \vc{\theta}, \vc{\phi}))}.
\]
It then follows from (\ref{eq-def-D}) that 
\begin{align}
g'(0) &= g''(0) = \cdots = g^{(n-2)}(0) = 0, 
\label{eq-g-0}
\\
g^{(n-1)}(0) &= { (n - 1) ! f(0) \over \ell(r) },
\label{eq-g-n-1}
\\
g^{(n)}(0) 
&= n! \left(
{f'(0) \over \ell(r)}
+
\alpha r^{\alpha - 2} 
{f(0) \over (\ell(r))^2}
\left(\prod_{i=1}^{n-1} \sin\theta_i\sin\phi_i
\right.
\right.
\nonumber
\\
&
\left.
\left.
\quad
+ 
\sum_{i=1}^{n-1}\cos \theta_i \cos\phi_i \prod_{j=1}^{i-1}\sin\theta_j\sin\phi_j
\right)\right). 
\label{eq-g-n}
\end{align}
In addition, $(\rmd / \rmd s) e^{- {s \over d_0}}\mid_{s = 0} = - 1 / d_0$. 
Therefore, we can apply Corollary~\ref{lem-watson}
and obtain
\begin{align}
&\int_{\bbR_+}\!\!\!
{s^{n-1}  e^{\sigma^2 e^{- {s \over d_0}}} f(s) \over \ell(D(r, s, \vc{\theta}, \vc{\phi}))}
\rmd s 
\sim
d_0^n  e^{\sigma^2}
\left(
{g^{(n-1)}(0) \over \sigma^{2 n}}
+
\left(
d_0 g^{(n)}(0) 
\right.
\right.
\nonumber
\\
&\left.
\left.
+
{n (n +1)  g^{(n-1)}(0)\over 2} 
\right)
{1 \over \sigma^{2 (n+1)}}
+ O\left({1 \over \sigma^{2 (n+2)}}\right)
\right).
\label{eq-sim-Psi_temp}
\end{align}
Note that 
\begin{align}
&\int_{\calP}\int_{\calP}
\left(\prod_{i=1}^{n-1} \sin\theta_i\sin\phi_i
\right.
\nonumber
\\
&\qquad\left.+
\sum_{i=1}^{n-1}\cos \theta_i \cos\phi_i \prod_{j=1}^{i-1}\sin\theta_j\sin\phi_j\right)
\rmd \vc{\theta}\rmd \vc{\phi}
= 0.
\nonumber
\end{align}
Thus, combining the above expression with (\ref{eq-g-n-1}) and (\ref{eq-g-n}) and substituting
them into (\ref{eq-sim-Psi_temp}), we have
%
%
%
\begin{align}
&\Psi_n(f) \sim d_0^n e^{\sigma^2}
\int_{\calP}\int_{\calP}
\int_{\bbR_+}
{r^{n-1} \over (\ell(r))^2}
\left\{
{(n-1)!f(0)  \over \sigma^{2n}}
\right.
\nonumber
\\
&~~~~+ 
\left(
{(n + 1) ! f(0) \over 2}
+
d_0 n! f'(0) 
\right)
{1 \over \sigma^{2(n+1)}} 
\nonumber
\\
&~~~~\times
\left.
J(\vc{\theta})
J(\vc{\phi})
\rmd s \rmd r
\rmd \vc{\theta} \rmd \vc{\phi}
+ O\left({1 \over \sigma^{2 (n+2)}}\right)\right\}
\nonumber
\\
&=
d_0^{n}e^{\sigma^2} 
\int_{\bbR^n} {\rmd x \over (\ell(\|x\|))^2}
\int_{\calP}
J(\vc{\phi}) \rmd \vc{\phi}
\left\{
{(n-1)!f(0)  \over \sigma^{2n}}
\right.
\nonumber
\\
&~~~~
+
\left(
{(n + 1) ! f(0) \over 2}
+
d_0 n! f'(0) 
\right)
{1 \over \sigma^{2(n+1)}} 
\nonumber
\\
&
~~~
\left.{}
+O\left({1 \over \sigma^{2 (n+2)}}\right)
\right\}.
\nonumber
\end{align}
Finally, combining the above expression with (\ref{eq-E[I^2]-sec-term}) and
\[
\int_{\calP} J(\vc{\phi}) \rmd \vc{\phi} = 
{ \int_0^1 r^{n-1} \rmd r \int_{\calP}J(\vc{\phi}) \rmd \vc{\phi} \over  \int_0^1 r^{n-1} \rmd r } 
= n V_n,
\]
we obtain (\ref{eq-int-I^2-sub-last-n=2}). 

\section{Proof of Lemma~\ref{thm-corr-spatial}}
\label{proof-corr-spatial}


In this appendix, we prove the case of $n=1$ because the case of $n=2$ is already shown. 
From Campbell's theorem and (\ref{eq-assu-1}), 
the first term in (\ref{eq-spa-corr-0}) can be rewritten as
\begin{eqnarray}
\lefteqn{
\bbE
\left[
\sum_{x_i, x_j \in \Phi_{\neq}^{(2)}}
{ h_{i} \tilde{h}_{j} \calS(x_i) \calS(x_j) \over \ell(\|x_i\|) \ell(\|x_j - \tilde{o}_\delta\|)}
\right] 
} &&
\nonumber
\\
&=&
\lambda^2 p^2
\int \!\!\!
\int_{(\bbR)^2}
{ \exp\left(\sigma^2 e^{-{\|x - y\| + \delta \over d_0}}\right)
\over \ell(\|x + \tilde{o}_\delta\|) \ell(\|y\|)}
\rmd x \rmd y.
\label{eq-int-spa-cor-1-appen}
\end{eqnarray}
By considering $|x- y| = s \in \bbR_+$, the integral term in (\ref{eq-int-spa-cor-1-appen}) can be rewritten as
\begin{eqnarray}
\lefteqn{
\int_{\bbR}\int_{\bbR}
{e^{\sigma^2 e^{-{|x - y| + \delta \over d_0}}}
\over
\ell(\|x + \tilde{o}_\delta\|)
\ell(\|y\|)} \rmd y \rmd x
} &&
\nonumber
\\
&=&
\int_{\bbR}\!
\int_{\bbR_+} \!
{e^{\sigma^2 e^{-{s+\delta \over d_0}}} \over \ell(|x + \delta|)}
\left[
{1
\over
\ell(|x + s|)}
+
{1 \over
\ell(|x - s|)} \right] \rmd s \rmd x.
\nonumber
\\
&=&
\int_{\bbR}\!
{1 \over \ell(|x + \delta|)}
\int_{\bbR_+} \!
e^{\sigma^2 e^{-{s+\delta \over d_0}}}
\left[
{1
\over
\ell(|x| + s)}
+
{\dd{1}(s < |x|) \over
\ell(|x| - s)} 
\right.
\nonumber
\\
&{}&
+
\left.
{\dd{1}(s \ge |x|) \over
\ell(s - |x|)} 
\right] \rmd s \rmd x.
\label{eq-int-delta-main}
\nonumber
\end{eqnarray}
%
Applying Corollary~\ref{lem-watson}, we obtain, for $x \ge 0$, 
\begin{eqnarray}
&&\int_0^{\infty} 
{e^{\sigma^2 e^{-{s+\delta \over d_0}}}
\over
\ell(x + s)}
\rmd s
\sim
e^{\sigma^2 e^{- {\delta \over d_0}}}
\left[
{d_0 \over \eps_0 + x^\alpha} {e^{\delta \over d_0} \over \sigma^2}
\right.
\nonumber
\\
&&
\left.
+
\left(
{d_0 \over \eps_0 + x^\alpha} 
-
{\alpha d_0^2 x^{\alpha -1}\over (\eps_0 + x^{\alpha})^2} 
\right)
{e^{\delta \over d_0} \over \sigma^4}
+ O\left({e^{3 \delta \over d_0} \over \sigma^6}\right)
\right],\quad
\label{eq-int-delta-main-add-1}
\\
&&\int_0^{x} 
{e^{\sigma^2 e^{-{s+\delta \over d_0}}}
\over
\ell(x - s)}
\rmd s
\sim
e^{\sigma^2 e^{- {\delta \over d_0}}}
\left[
{d_0 \over \eps_0 + x^\alpha} {e^{\delta \over d_0} \over \sigma^2}
\right.
\nonumber
\\
&&
\left.
+
\left(
{d_0 \over \eps_0 + x^\alpha} 
+
{\alpha d_0^2 x^{\alpha -1} \over (\eps_0 + x^{\alpha})^2} 
\right)
{e^{\delta \over d_0} \over \sigma^4}
+ O\left({e^{3 \delta \over d_0} \over \sigma^6}\right)
\right].
\label{eq-int-delta-main-add-2}
\end{eqnarray}
In the same way as the above, we can also have, for $x \ge 0$, 
\begin{eqnarray}
\int_x^{\infty} 
{e^{\sigma^2 e^{-{s+\delta \over d_0}}}
\over
\ell(s - x)}
\rmd s
&=&
\int_0^{\infty} 
{e^{\sigma^2 e^{-{x + s+\delta \over d_0}}}
\over
\ell(s)}
\rmd s
\nonumber
\\
&\sim&
{d_0 \over \eps_0} {e^{x + \delta \over d_0} \over \sigma^2}
+ 
O\left(
{e^{2(x + \delta) \over d_0} \over \sigma^4}
\right).
\label{eq-int-delta-main-add-3}
\end{eqnarray}
Therefore, substituting (\ref{eq-int-delta-main-add-1})--(\ref{eq-int-delta-main-add-3}) into (\ref{eq-int-delta-main}) leads to 
\begin{eqnarray}
\lefteqn{
\int_{\bbR}\int_{\bbR}
{e^{\sigma^2 e^{-{|x - y| + \delta \over d_0}}}
\over
\ell(\|x + \tilde{o}_\delta\|)
\ell(\|y\|)} \rmd y \rmd x
\sim
\int_{\bbR}{1 \over \ell(|x + \delta|)}
}
\nonumber
\\
&\times&
\left[
e^{\sigma^2 e^{- {\delta \over d_0}}}
{2 d_0 \over \eps_0 + |x|^\alpha}
\left\{
{e^{\delta \over d_0} \over \sigma^2}
+
{e^{2 \delta \over d_0} \over \sigma^4}
+
O\left(
{e^{3 \delta \over d_0} \over \sigma^6}
\right)
\right\}
\right.
\nonumber
\\
&+&
\left.
e^{\sigma^2 e^{- {|x| + \delta \over d_0}}}
\left\{
{d_0 \over \eps_0} {e^{|x| + \delta \over d_0} \over \sigma^2}
+
O\left(
{e^{2(|x| + \delta) \over d_0} \over \sigma^4}
\right)
\right\}
\right]\rmd x.
\label{eq-int-delta-main-x-2}
\end{eqnarray}
Recall here that (see the definition of $U_1(\delta)$, i.e., (\ref{eq-def-U})) 
\begin{equation}
U_1(\delta) = \int_{\bbR} {\rmd x \over \ell(|x|)\ell(|x - \delta|)} 
= \int_{\bbR} {\rmd x \over \ell(|x + \delta|)(\eps_0 + |x|^\alpha)} .
\label{eq-def-U-1}
\end{equation}
In addition, it follows from Corollary~\ref{lem-watson} that
\begin{eqnarray}
\lefteqn{
\int_{\bbR}{e^{\sigma^2 e^{- {x + \delta \over d_0}}} e^{|x| + \delta \over d_0} \rmd x
 \over \ell(|x + \delta|)} 
=
\int_{0}^\infty{e^{\sigma^2 e^{- {x + \delta \over d_0}}} e^{x + \delta \over d_0} \rmd x
\over \ell(x + \delta)} 
} &&
\nonumber
\\
&{}&
+
\int_{0}^\delta{e^{\sigma^2 e^{- {x + \delta \over d_0}}} e^{x + \delta \over d_0} \rmd x
\over \ell(\delta - x)} 
+ 
\int_{\delta}^\infty{e^{\sigma^2 e^{- {x + \delta \over d_0}}} e^{x + \delta \over d_0} \rmd x
\over \ell(x - \delta)} 
\nonumber
\\
&\sim&
2 e^{\sigma^2 e^{- \delta \over d_0}}
\left[
{d_0 \over (\eps_0 + \delta^{\alpha})} {e^{2\delta \over d_0} \over \sigma^2}
+ 
O\left({e^{4\delta \over d_0} \over \sigma^4} \right)
\right]
\nonumber
\\
&{}&
+
e^{\sigma^2 e^{- 2 \delta \over d_0}}
\left[
{d_0 \over \eps_0} {e^{4\delta \over d_0} \over \sigma^2}
+
O\left({e^{6\delta \over d_0} \over \sigma^4} \right)
\right].
\nonumber
\end{eqnarray}
Thus, combining the above and (\ref{eq-def-U-1}) with (\ref{eq-int-delta-main-x-2}) yields
\begin{eqnarray}
\lefteqn{
\int_{\bbR}\int_{\bbR}
{e^{\sigma^2 e^{-{|x - y| + \delta \over d_0}}}
\over
\ell(\|x + \tilde{o}_\delta\|)
\ell(\|y\|)} \rmd y \rmd x
}&&
\nonumber
\\
&\sim&
2 d_0 e^{\sigma^2 e^{ {-\delta \over d_0}}}
U_1(\delta)
\left[
{e^{\delta \over d_0} \over \sigma^2}
+
{e^{2\delta \over d_0} \over \sigma^4}
+
O\left({e^{3\delta \over d_0} \over \sigma^6}\right)
\right]
\nonumber
\\
&{}&
+
e^{\sigma^2 e^{ {-\delta \over d_0}}}
{ 2 d_0^2 e^{2 \delta \over d_0} \over (\eps_0 + \delta^\alpha) \eps_0\sigma^4}
+
e^{\sigma^2 e^{- {2\delta \over d_0}}}
{d_0^2 e^{4 \delta \over d_0} \over \eps_0^{2}\sigma^4}.
\label{eq-int-delta-main-x-3}
\end{eqnarray}
Furthermore, recall that the second term in (\ref{eq-spa-corr-0}) can be rewritten as
\begin{eqnarray}
\bbE \left[
\sum_{x_i\in \Phi} {h_i \tilde{h}_i \calS_i \over \ell(\|x_i\|)\ell(\|x_i - \tilde{o}_\delta\|) }\right]
=
\lambda p e^{\sigma^2 e^{- \delta \over d_0}} U_1(\delta).
\nonumber
\end{eqnarray}
Consequently, putting (\ref{eq-int-delta-main-x-3}) into (\ref{eq-int-spa-cor-1-appen}) and combining it and 
the above into (\ref{eq-spa-corr-0}) completes the proof. 
\end{document}